\def\MdN{\ensuremath{\mathbb{N}}}
\newcommand{\Oh}[1]{\mathcal{O}\!\left( #1\right)}
\newcommand{\Is}{:=}
\newcommand{\etal}{et~al.~}
\newcommand{\ie}{i.e.\ }
\newcommand{\CC}{C\texttt{++}}
\newif\ifEnableExtend
\newif\ifDoubleBlind
\newcommand*{\shorterDots}{.\kern-0.06em.\kern-0.06em.} 
\newcommand{\cut}{\mathcal{C}}
\newcommand{\bestcut}{\widehat{\cut}}
\newcommand{\wgt}{\mathcal{W}}
\newcommand{\bestwgt}{\widehat{\wgt}}
\newcommand{\vopt}{\mathcal{V}}
\newcommand{\queue}{\mathcal{Q}}
\newcommand{\vcbase}{\texttt{VieCut-MTC}}
\newcommand{\exact}{\texttt{Exact-MTC}}
\newcommand{\inexact}{\texttt{Inexact-MTC}}
\newcommand{\csch}[1]{{\color{red}{[ CS says: #1 ]}}}
\let\oldReturn\Return
\renewcommand{\Return}{\State\oldReturn}
\title{Faster Parallel Multiterminal Cuts}
\author{Double Blind}{Double Blind University}{}{}{}
\authorrunning{D. Blind}
\else\author{Monika Henzinger}{University of Vienna, Faculty of Computer Science, Vienna, Austria}{monika.henzinger@univie.ac.at}{https://orcid.org/0000-0002-5008-6530}{}
\author{Alexander Noe}{University of Vienna, Faculty of Computer Science, Vienna, Austria}{alexander.noe@univie.ac.at}{https://orcid.org/0000-0002-4711-3323}{}
\author{Christian Schulz}{University of Vienna, Faculty of Computer Science, Vienna, Austria}{christian.schulz@univie.ac.at}{https://orcid.org/0000-0002-2823-3506}{}
\authorrunning{M. Henzinger, A. Noe and C. Schulz}
\begin{document}

\maketitle

\begin{abstract}
We give an improved branch-and-bound solver for the multiterminal cut problem, based on the recent work of Henzinger~\etal\cite{henzinger2020shared}. We contribute new, highly effective data reduction rules to transform the graph into a smaller equivalent instance. In addition, we present a local search algorithm that can significantly improve a given solution to the multiterminal cut problem.
Our exact algorithm is able to give exact solutions to more and harder problems compared to the state-of-the-art algorithm by Henzinger~\etal\cite{henzinger2020shared}; and give better solutions for more than two third of the problems that are too large to be solved to optimality. Additionally, we give an inexact heuristic algorithm that computes high-quality solutions for very hard instances in reasonable time.
\end{abstract}

\ifDoubleBlind
\setcounter{page}{0}
\fi

\section{Introduction}
The multiterminal cut problem is a fundamental combinatorial optimization problem which was first formulated by Dahlhaus~\etal\cite{dahlhaus1994complexity} and Cunningham~\cite{cunningham1989optimal}. 
Given an undirected edge-weighted graph $G=(V,E,w)$ with edge weights $w: E \mapsto \MdN_{>0}$ and a set $T$, $|T|=k$, of terminals, the \emph{multiterminal cut} problem is to divide its set of nodes into~$k$ blocks such that each block contains exactly one terminal and the weight sum of the edges running between the~blocks~is~minimized. 
There are many applications of the problem: for example multiprocessor scheduling~\cite{DBLP:journals/tse/Stone77}, clustering~\cite{DBLP:journals/njc/PferschyRW94} and bioinformatics~\cite{karaoz2004whole,nabieva2005whole,vazquez2003global}. 

The problem is known to be NP-hard for $k \geq 3$~\cite{dahlhaus1994complexity}.
For $k=2$ the problem reduces to the well known minimum $s$-$t$-cut problem, which is in P. 
The minimum $s$-$t$-cut problem aims to find the minimum cut in which the vertices $s$ and $t$ are in different blocks. Most algorithms for the minimum multiterminal cut problem use minimum s-t-cuts as a subroutine. Dahlhaus~\etal\cite{dahlhaus1994complexity} give a $2(1-1/k)$ approximation algorithm with polynomial running time based on the notion of \emph{minimum isolating cuts}, \ie the minimum cut separating a terminal from all other terminals. 
The currently best known approximation algorithm due to Buchbinder~\etal\cite{buchbinder2013simplex} uses a linear program relaxation to achieve an approximation ratio of $1.323$.
Recently, Henzinger~\etal\cite{henzinger2020shared} introduced a branch-and-reduce framework for the problem that is multiple orders of magnitudes that classic ILP formulations for the problem which has been the de facto standard used by practitioners. This allows researchers to solve instances to optimality that are significantly larger than was previously possible and hence enables the use of multiterminal cut algorithms in practical applications.

\textbf{Contribution.}
We give an improved solver for the multiterminal cut problem, based on the recent work of Henzinger~\etal\cite{henzinger2020shared}. We contribute new, highly effective reductions to transform the graph into a smaller equivalent instance. In addition, we present a local search algorithm that can significantly improve a given solution to the multiterminal cut problem. Additionally, we combine the branch-and-reduce solver with an integer linear program solver to more efficiently solve subproblems emerging over the course of the algorithm. With our newly introduced reducitons, the state-of-the-art algorithm by Henzinger~\etal\cite{henzinger2020shared} is able to solve significantly harder instances to optimality and give better solutions to instances that are too large to solve to optimality. Additionally, we give an inexact algorithm that gives high-quality solutions to hard problems in reasonable time. 

\section{Preliminaries}\label{s:preliminaries}

\subsection{Basic Concepts}
Let $G = (V, E, w)$ be a weighted undirected graph with vertex set $V$, edge set $E \subset V \times V$ and
non-negative edge weights $w: E \rightarrow \mathbb{N}$. 
We extend $w$ to a set of edges $E' \subseteq E$ by summing the weights of the edges; that is, $w(E')\Is \sum_{e=(u,v)\in E'}w(u,v)$ and sets of nodes where $w(V_1, V_2)$ is the sum of edge weights connecting sets $V_1$ and $V_2$. 
Let $n = |V|$ be the
number of vertices and $m = |E|$ be the number of edges in $G$. The \emph{neighborhood}
$N(v)$ of a vertex $v$ is the set of vertices adjacent to $v$. The \emph{weighted degree} of a vertex is the sum of the weights of its incident edges.
For a set of vertices $A\subseteq V$, we denote by $E[A]\Is \{(u,v)\in E \mid u\in A, v\in V\setminus A\}$; that is, the set of edges in $E$ that start in $A$ and end in its complement.
A \emph{$k$-cut}, or \emph{multicut}, is a partitioning of $V$ into $k$ disjoint non-empty blocks, \ie $V_1 \cup \dots \cup V_k = V$. The weight of a $k$-cut is defined as the weight sum of all edges crossing block boundaries, \ie $w(E \cap \bigcup_{i<j}V_i\times V_j)$.

\subsection{Multiterminal Cuts}

A \emph{multiterminal cut} for $k$ terminals $T = \{t_1,\shorterDots,t_k\}$ is a multicut with~$t_1 \in V_1,\shorterDots,t_k \in V_k$. Thus, a multiterminal cut pairwisely separates all terminals from each other. The edge set of the multiterminal cut with minimum weight of $G$ is called $\cut(G)$ and the associated optimal partitioning of vertices is denoted as $\vopt = \{\vopt_1, \dots, \vopt_k\}$. For a vertex $v \in V$, $\vopt(v)$ denotes the block affiliation of $v$ in the optimal partitioning $\vopt$.
 $\cut$ can be seen as the set of all edges that cross block boundaries in $\vopt$, \ie $\cut(G) = \bigcup \{e = (u,v) \mid \vopt_u \neq \vopt_v\}$. The weight of the minimum multiterminal cut is denoted as $\wgt(G) = w(\cut(G))$. At any point in time, the best currently known upper bound for $\wgt(G)$ is denoted as $\bestwgt(G)$ and the best currently known multiterminal cut is denoted as $\bestcut(G)$. If graph $G$ is clear from the context, we omit it in the notation. 
 There may be multiple minimum multiterminal cuts, however, we aim to find one multiterminal cut with minimum weight.

In this paper we use \emph{minimum s-T-cuts}. For a vertex $s$ (\emph{source}) and a non-empty vertex set $T$ (\emph{sinks}), the minimum s-T-cut is the smallest cut in which $s$ is one side of the cut and all vertices in $T$ are on the other side. This is a generalization of minimum s-t-cuts that allows multiple vertices in $T$ and can be easily replaced by a minimum s-t-cut by connecting every vertex in $T$ with a new super-sink by infinite-capacity edges. We denote the capacity of a minimum-s-T-cut, \ie the sum of weights in the smallest cut separating $s$ from $T$, by $\lambda(G,s,T)$. This cut is also called the \emph{minimum isolating cut}~\cite{dahlhaus1994complexity} for vertex $s$ and vertex set $T$ and the minimum isolating cut where the source side is the largest is called the \emph{largest minimum isolating cut} for $s$ and $T$.

In our algorithm we use \emph{graph contraction} and \emph{edge deletions}.
Given
an edge $e = (u, v) \in E$, we define $G/e$ to be the graph  
after \emph{contracting} $e$.
In the contracted graph, we delete vertex $v$ and all incident edges. For each edge $(v, x) \in E$, we add an edge $(u, x)$
with $w(u, x) = w(v, x)$ to $G$ or, if the edge already exists, we give it the edge
weight $w(u,x) + w(v,x)$. For the \emph{edge deletion} of an edge $e$, we define $G-e$ as the graph $G$ in which $e$ has been removed. Other vertices and edges remain the same. An \emph{articulation point} is a vertex whose removal disconnects the graph $G$ into multiple disconnected components. For a given multiterminal cut $S$, the graph $G\backslash S$ splits $G$ into $k$ connected components, called \emph{blocks}, as defined by the cut edges in $S$, each containing exactly one terminal. 

In the last two decades significant advances in FPT algorithms have been made: an NP-hard graph problem is fixed-parameter tractable (FPT) if large inputs can be solved efficiently and provably optimally, as long as some problem parameter is small. This has resulted in an algorithmic toolbox that are by now well-established. 
While the multiterminal cut problem is NP-hard, it is \emph{fixed-parameter tractable} (FPT), parameterized by the multiterminal cut weight $\wgt(G)$. A problem is fixed-parameter tractable if there is a parameter $\sigma$ so that there is an algorithm with runtime $f(\sigma) \cdot n^{\Oh{1}}$. Marx~\cite{marx2006parameterized} proved that the multiterminal cut problem is FPT and Chen~\etal\cite{chen2009improved} gave the first FPT algorithm with a running time of $4^{\wgt(G)}\cdot n^{\Oh{1}}$, later improved by Xiao~\cite{xiao2010simple} to $2^{\wgt(G)}\cdot n^{\Oh{1}}$ and by Cao~\etal\cite{cao20141} to $1.84^{\wgt(G)}\cdot n^{\Oh{1}}$. 
Generally, few of the new FPT techniques are implemented and tested on real datasets, and their practical potential is far from understood. 
However, recently the engineering part in area has gained some momentum. 
There are several experimental studies in the area that take up ideas from FPT or kernelization theory, e.g.~for independent sets (or equivalently vertex cover)~\cite{akiba-tcs-2016,DBLP:conf/sigmod/ChangLZ17,dahlum2016accelerating,DBLP:conf/alenex/Lamm0SWZ19,DBLP:journals/corr/abs-1908-06795,DBLP:conf/alenex/Hespe0S18}, for cut tree construction\cite{DBLP:conf/icdm/AkibaISMY16}, for treewidth computations \cite{bannach_et_al:LIPIcs:2018:9469,DBLP:conf/esa/Tamaki17,koster2001treewidth}, for the feedback vertex set problem \cite{DBLP:conf/wea/KiljanP18,DBLP:conf/esa/FleischerWY09}, for the dominating set problem~\cite{10.1007/978-3-319-55911-7_5}, for the minimum cut~\cite{henzinger2019shared,henzinger2018practical}, for the node ordering problem~\cite{ost2020engineering}, for the maximum cut problem~\cite{DBLP:journals/corr/abs-1905-10902} and for the cluster editing problem~\cite{Boecker2011}.
Recently, this type of data reduction techniques is also applied for problem in P such as matching \cite{DBLP:conf/esa/KorenweinNNZ18},

\subsection{\vcbase}

We present an improved solver for the multiterminal cut problem. Our work is based on a recent result by Henzinger~\etal\cite{henzinger2020shared}, in the following named \vcbase{}. In this section we give a short summary of their results, for further details we refer the reader to their original work~\cite{henzinger2020shared}. The \vcbase{} multiterminal cut solver is a shared-memory parallel solver for the multiterminal cut problem. \vcbase{} is a branch-and-reduce algorithm that performs a set of local contraction routines to transform the graph $G$ into an instance of smaller size $H$, where the minimum multiterminal cut $\wgt(G) = \wgt(H)$, \ie the minimum multiterminal cut $\cut(G)$ can still be found on the smaller instance $H$. For this purpose, they use the following lemmas:

\begin{lemma}\label{lem:cont} \cite{cao20141}\cite{henzinger2020shared}
  If an edge $e = (u,v) \in G$ is guaranteed not to be in at least one multiterminal cut $\cut(G)$ (\ie $\vopt(u) = \vopt(v)$), we can contract $e$ and $\wgt(G/e) = \wgt(G)$. (Proof in~\cite{henzinger2020shared})
\end{lemma}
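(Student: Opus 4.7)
The plan is to establish the equality $\wgt(G/e)=\wgt(G)$ by proving two inequalities, each of which transports a cut across the contraction operation while preserving weight. The only information we will use is the hypothesis that some optimum partition $\vopt$ of $G$ places $u$ and $v$ in the same block, and the definition of contraction (delete $v$, merge parallel edges to $u$ by summing weights).

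For the direction $\wgt(G/e)\le \wgt(G)$, I would start from the optimum partition $\vopt=\{\vopt_1,\dots,\vopt_k\}$ guaranteed by the hypothesis, in which $u$ and $v$ lie in the same block, say $\vopt_i$. Define a partition $\vopt'$ of the vertex set of $G/e$ by replacing $\vopt_i$ with $(\vopt_i\setminus\{v\})$, leaving all other blocks untouched. Since the contraction only removes $v$ and redistributes its incident edges onto $u$, and since $u\in \vopt_i$, every edge of $G$ that crossed block boundaries under $\vopt$ maps to an edge of $G/e$ that crosses block boundaries under $\vopt'$ with the same (or summed, in case of parallel edges merging) weight; the edge $e$ itself contributes $0$ to both sides because $u,v$ are on the same side. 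Thus $\vopt'$ is a feasible multiterminal cut of $G/e$ (each terminal $t_j$ still lies in exactly one block, as no terminal was removed) of weight exactly $\wgt(G)$.

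For the direction $\wgt(G/e)\ge \wgt(G)$, I would take any optimum multiterminal partition $\mathcal{U}=\{\mathcal{U}_1,\dots,\mathcal{U}_k\}$ of $G/e$ and lift it back to $G$. In $G/e$ the vertices $u$ and $v$ have been identified into a single vertex $u$; say this merged vertex lies in $\mathcal{U}_j$. Define a partition $\mathcal{U}'$ of $V(G)$ by placing both $u$ and $v$ into $\mathcal{U}_j$ and keeping every other vertex in its original block. Because $w(u,x)+w(v,x)$ in $G$ equals the weight of the merged edge to $x$ in $G/e$ for every neighbor $x\notin\{u,v\}$, the total weight of boundary edges under $\mathcal{U}'$ in $G$ equals $w(\text{cut of }\mathcal{U})$ in $G/e$; the edge $e$ contributes nothing because $u$ and $v$ are in the same block of $\mathcal{U}'$. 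Hence $\wgt(G)\le \wgt(G/e)$.

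Combining the two inequalities yields $\wgt(G/e)=\wgt(G)$, as required. No step looks like a genuine obstacle: the subtlety, if any, is just bookkeeping the case where the contraction creates parallel edges, which is handled by the summation convention in the definition of $G/e$ that was stated just before the lemma. The argument also makes clear the stronger fact that any optimum cut in one graph can be converted into an optimum cut in the other, which justifies using contraction as a data-reduction rule in the branch-and-reduce framework.
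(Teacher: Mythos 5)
Your proof is correct and is exactly the standard weight-preserving correspondence between cuts of $G$ with $u,v$ in one block and cuts of $G/e$; the paper itself gives no proof here but defers to the cited reference, whose argument is the same two-inequality projection/lifting you describe. The only bookkeeping point worth stating explicitly is that if $v$ is a terminal the merged vertex inherits its terminal label (the case of both endpoints being terminals is excluded by the hypothesis $\vopt(u)=\vopt(v)$), so feasibility of the projected and lifted partitions is preserved.
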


\begin{lemma}\label{lem:del} \cite{cao20141}\cite{henzinger2020shared}
  If an edge $e = (u,v) \in E$ is guaranteed to be in a minimum multiterminal cut, \ie there is a minimum multiterminal cut $\cut(G)$ in which $\vopt(u) \neq \vopt(v)$, we can delete $e$ from $G$ and $\cut(G - e)$ is still a valid minimum multiterminal cut. (Proof in~\cite{henzinger2020shared})
\end{lemma}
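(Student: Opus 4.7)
The plan is to show two directions: first that a minimum multiterminal cut of $G$ containing $e$ remains a multiterminal cut of $G-e$ after removing $e$, and second that any minimum multiterminal cut of $G-e$, lifted back to $G$ by re-adding $e$ between its endpoints, is a minimum multiterminal cut of $G$.

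First I would fix a minimum multiterminal cut $\cut^*$ of $G$ that contains $e$, which exists by hypothesis, and let $\vopt^* = \{\vopt^*_1,\dots,\vopt^*_k\}$ be the corresponding vertex partition with $u \in \vopt^*_i$ and $v \in \vopt^*_j$ for some $i \neq j$. Because $\vopt^*$ already separates the terminals and $u,v$ lie in different blocks, $\vopt^*$ is still a valid multiterminal partition of $G-e$, whose cut edge set is exactly $\cut^* \setminus \{e\}$. Thus
\[
\wgt(G-e) \;\leq\; w(\cut^*) - w(e) \;=\; \wgt(G) - w(e).
\]

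For the reverse direction I would take any minimum multiterminal cut $\cut(G-e)$ of $G-e$ with associated partition $\vopt' = \{\vopt'_1,\dots,\vopt'_k\}$. I now case on the placement of $u$ and $v$ in $\vopt'$. If $u$ and $v$ lay in the same block of $\vopt'$, then $\vopt'$ would also be a valid multiterminal partition of $G$, and the edge $e$ would not cross any block boundary, so its cut weight in $G$ would equal $\wgt(G-e) \leq \wgt(G) - w(e) < \wgt(G)$, contradicting the optimality of $\wgt(G)$. Hence $u$ and $v$ must lie in different blocks of $\vopt'$, and the cut edge set of $\vopt'$ in $G$ is exactly $\cut(G-e)\cup\{e\}$ with weight $\wgt(G-e)+w(e)$, which is at least $\wgt(G)$.

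Combining the two inequalities yields $\wgt(G-e) = \wgt(G) - w(e)$, and by the case analysis above the partition $\vopt'$ is a minimum multiterminal cut partition of $G$ whose cut edge set is $\cut(G-e)\cup\{e\}$. This is exactly the statement that $\cut(G-e)$ remains a valid minimum multiterminal cut (once $e$ is reintroduced). The only subtle step is ruling out that a minimum cut of $G-e$ places $u$ and $v$ in the same block; I expect this to be the main obstacle to keep rigorous, and handling it via a strict-inequality contradiction using $w(e) \geq 1$ (since $w: E \to \MdN_{>0}$) is the cleanest route.
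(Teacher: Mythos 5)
Your proof is correct. Note that the paper itself does not prove this lemma --- it defers to the cited reference --- so there is no in-paper argument to compare against; your two-direction exchange argument (project an optimal cut of $G$ containing $e$ down to $G-e$, then lift an optimal cut of $G-e$ back up) is the standard and correct way to establish it. The one step you flag as subtle, ruling out that an optimal partition of $G-e$ places $u$ and $v$ in the same block, does go through exactly as you describe because the paper assumes strictly positive edge weights, giving the strict inequality $\wgt(G-e) \leq \wgt(G) - w(e) < \wgt(G)$ needed for the contradiction.
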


Lemma~\ref{lem:cont} allows the contraction of edges that are guaranteed not to be in at least on multiterminal cut and Lemma~\ref{lem:del} allows the deletion of edges that are guaranteed to be in a multiterminal cut. An example for such an edge is an edge that connects two terminal vertices.

\paragraph*{Largest Minimum Isolating Cut}

Dahlhaus~\etal\cite{dahlhaus1994complexity} show that there exists a minimum multiterminal cut $\cut(G)$ for a graph $G$ such that for every terminal $t \in T$ all vertices on  the source side of the largest minimum isolating cut are in block $t$. Thus, according to Lemma~\ref{lem:cont}, the source sides can be contracted into their respective terminals. The cut value of this problem is equal to the sum of all isolating cuts minus the heaviest, as any set of $t-1$ isolating cuts pairwisely separates all terminals form each other. A lower bound for the optimal solution is the sum of all isolating cuts divided by two~\cite{dahlhaus1994complexity,henzinger2020shared}.

\paragraph*{Reductions}
A variety of reductions in the work of Henzinger~\etal\cite{henzinger2020shared} use Lemma~\ref{lem:cont} to contract edges and effectively reduce the size of the input graph. For every \emph{low degree vertex} $v$ with $N(v) \geq 2$, one can contract the heaviest edge incident to $v$ as there is at least one multiterminal cut that does not contain it. Every \emph{heavy edge} $e = (v,u)$ with $w(e) \cdot 2 \geq w(E[v])$ can also be contracted. This condition can be relaxed to \emph{heavy triangles}, where an edge $e = (v_1, v_2)$ that is part of a triangle $(v_1,v_2,v_3)$ can be contracted if $w(e) + w(v_1, v_3) \cdot 2 \geq w(E[v_1])$ and $w(e) + w(v_2, v_3) \cdot 2 \geq w(E[v_2])$. A more global reduction uses the CAPFOREST algorithm of Nagamochi~\etal\cite{nagamochi1992computing,nagamochi1994implementing} to find a connectivity lower bound of every edge in $G$ in almost linear time. If an edge $e=(u,v)$ has \emph{high connectivity}, \ie there is no small cut that separates $u$ and $v$, and no multiterminal cut that separates its incident vertices can be better than $\bestwgt(G)$, the edge can be contracted according to Lemma~\ref{lem:cont}. For full descriptions and proofs of these reductions we refer the reader to Section $4$ of~\cite{henzinger2020shared}.

\paragraph*{Branching}

When it is not possible to find any more edges to contract or delete, \vcbase{} selects an edge $e$ incident to a terminal and creates two subproblems: $G/e$ represents the problem in which $e$ is not part of the multiterminal cut $\cut(G)$ and $G-e$ represents the problem in which it is. Both subproblems are added to a shared-memory parallel problem queue $\queue$ and solved independently from each other.

\section{Improved Reductions and Branching}

We now introduce a set of new reductions to further decrease the problem size. Additionally, we give an alternative branching rule that allows for faster branching.

\subsection{New Reductions}

\vcbase{} contracts edges incident to low degree vertices, edges with high weight and edges whose incident vertices have a high connectivity. Additionally, \vcbase{} contracts the largest minimum isolating cut for each terminal to the remainder of the terminal set. We now introduce additional reductions that are able to further shrink the graph and thus speed up the algorithm.

\subsubsection{Articulation Points}
\label{ss:ap}

Let $\phi \in V$ be an articulation point in $G$ whose removal disconnects the graph into multiple connected components. For any of these components that does not contain any terminals, we show that all vertices in the component can be contracted into $\phi$.

\begin{lemma} \label{lem:ap}
  For an articulation point $\phi$ whose removal disconnects the graph $G$ into multiple connected components $(G_1, \dots, G_p)$ and a component $G_i$ with $i \in \{1,\dots,p\}$ that does not contain any terminals, no edge in $G_i$ or connecting $G_i$ with $\phi$ can be part of $\cut(G)$. \end{lemma}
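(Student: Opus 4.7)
The plan is to argue by contradiction: assume that some minimum multiterminal cut $\cut(G)$ uses at least one edge $e$ lying either inside $G_i$ or between $G_i$ and $\phi$, and construct a strictly lighter multiterminal cut, contradicting minimality.

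Concretely, let $\vopt = \{\vopt_1,\dots,\vopt_k\}$ be an optimal partition realising $\cut(G)$. I would define a new partition $\vopt'$ by keeping the block assignment of every vertex $v \notin V(G_i)$ unchanged and reassigning every vertex of $V(G_i)$ to the same block as $\phi$, i.e.\ $\vopt'(v) \Is \vopt(\phi)$ for all $v \in V(G_i)$. The feasibility of $\vopt'$ as a multiterminal partition is immediate from the hypothesis that $G_i$ contains no terminals: each terminal remains in the block it was in under $\vopt$, and every block is non-empty (the block $\vopt(\phi)$ still contains its terminal).

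The key structural step is to compare the cut weights of $\vopt$ and $\vopt'$. Because $\phi$ is an articulation point and $G_i$ is one of the resulting components, the only edges of $G$ that touch $V(G_i)$ are either internal to $G_i$ or incident to $\phi$. Partitioning $E$ into (i) edges internal to $G_i$, (ii) edges between $G_i$ and $\phi$, and (iii) edges not touching $V(G_i)$, I would observe that edges of type (iii) contribute identically to both cuts, while all edges of types (i) and (ii) have both endpoints in block $\vopt(\phi)$ under $\vopt'$ and therefore contribute $0$ to $w(\vopt')$. Hence $w(\vopt') \leq w(\vopt) - w(e) < w(\vopt)$, contradicting $\vopt$ being optimal.

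The only subtlety, and the step I expect to need the cleanest treatment, is the case analysis showing that no new crossing edge is created outside $V(G_i) \cup \{\phi\}$; this reduces entirely to the articulation-point property, so it is more a careful bookkeeping point than a real obstacle. Together with Lemma~\ref{lem:cont}, the result justifies contracting the whole component $G_i$ into $\phi$.
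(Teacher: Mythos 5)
Your proposal is correct and follows essentially the same route as the paper's proof: assume an edge of $\cut(G)$ lies in $G_i$ or joins $G_i$ to $\phi$, reassign all of $V(G_i)$ to the block $\vopt(\phi)$, and use the articulation-point property to see that no new crossing edges are created, contradicting minimality. Your version merely spells out the feasibility check and the three-way edge partition a bit more explicitly than the paper does.
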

\begin{proof}
  Let $e$ be an edge that connects two vertices in $\{V_i \cup \phi\}$. Assume $e \in \cut(G)$, \ie $e$ is part of the minimum multiterminal cut of $G$. This means that vertices in $\{V_i \cup \phi\}$ are not all in the same block. By changing the block affiliation of all vertices in $\{V_i \cup \phi\}$ to $\vopt(\phi)$ we can remove all edges connecting vertices in $\{V_i \cup \phi\}$ from the multiterminal cut, thus decrease the weight of the multiterminal cut by at least $w(e)$. As $\phi$ is an articulation point, $G_i$ is only connected to the rest of $G$ through $\phi$ and thus no new edges are introduced to the multiterminal cut. This is a contradiction to the minimality of $\cut(G)$, thus no edge $e$ that connects two vertices in $\{V_i \cup \phi\}$ is in the minimum multiterinal cut $\cut(G)$.
\end{proof}

Using Lemmas~\ref{lem:cont}~and~\ref{lem:ap} we can contract all components that contain no terminals into the articulation point $\phi$. All articulation points of a graph can be found in linear time using an algorithm by Tarjan and Vishkin~\cite{tarjan1985efficient} based on depth-first search. The algorithm performs a depth-first search and checks in the backtracking step whether for a vertex $v$ there exists an alternative path from the parent of $v$ to every of descendant of $v$. If there is no alternative path, $v$ is an articulation point in $G$.

\subsubsection{Equal Neighborhoods}

In many cases, the resulting graph of the reductions contains groups of vertices that are connected to the same neighbors. If the neighborhood and respective edge weights of two vertices are equal, we can use Lemmas~\ref{lem:cont}~and~\ref{lem:nbrhd} to contract them into a single vertex.

\begin{lemma} \label{lem:nbrhd}
  For two vertices $v_1$ and $v_2$ with $\{N(v_1) \backslash v_2\} = \{N(v_2) \backslash v_1\}$ where for all $v \in N(v_1) \backslash v_2$, $w(v_1, v) = w(v_2, v)$, there is at least one minimum multiterminal cut where $\vopt(v_1) = \vopt(v_2)$. 
\end{lemma}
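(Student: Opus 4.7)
The plan is to use a standard exchange/swap argument on the block assignment of $v_1$ and $v_2$. Start from an arbitrary minimum multiterminal cut with partition $\vopt$. If already $\vopt(v_1) = \vopt(v_2)$, there is nothing to show, so assume $\vopt(v_1) = A$ and $\vopt(v_2) = B$ with $A \neq B$. Consider two separate local modifications of $\vopt$: (i) move $v_2$ from $B$ to $A$, with resulting change in cut weight $\Delta_1$; and (ii) move $v_1$ from $A$ to $B$, with resulting change $\Delta_2$. The lemma is applied to non-terminal vertices (if both were terminals the conclusion would be unattainable by definition of multiterminal cut), so each of these single-vertex moves yields a valid multiterminal cut, and optimality of $\vopt$ therefore forces $\Delta_1 \geq 0$ and $\Delta_2 \geq 0$.

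Next, I would compute $\Delta_1$ and $\Delta_2$ explicitly using the equal-neighborhood hypothesis. Partition the common neighborhood $N \Is N(v_1)\setminus\{v_2\} = N(v_2)\setminus\{v_1\}$ into $N_A = N \cap A$, $N_B = N \cap B$ and $N_{\text{oth}} = N \setminus (A \cup B)$. Writing $w_u$ for the common weight $w(v_1,u) = w(v_2,u)$, a direct bookkeeping of which edges enter and leave the cut under each move gives
\[
\Delta_1 \;=\; -\!\!\sum_{u \in N_A}\!w_u \;+\; \sum_{u \in N_B}\!w_u \;-\; w(v_1,v_2),
\qquad
\Delta_2 \;=\; \phantom{-}\!\!\sum_{u \in N_A}\!w_u \;-\; \sum_{u \in N_B}\!w_u \;-\; w(v_1,v_2).
\]
Edges to vertices in $N_{\text{oth}}$ stay in the cut under both moves and so cancel; the edge $(v_1,v_2)$, if it exists, is cut before each move and uncut after, contributing $-w(v_1,v_2)$ to \emph{both} sides. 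Adding the two equations yields the key identity
\[
\Delta_1 + \Delta_2 \;=\; -2\,w(v_1,v_2) \;\le\; 0.
\]

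Combining this with $\Delta_1 \geq 0$ and $\Delta_2 \geq 0$ forces $\Delta_1 = \Delta_2 = 0$ (and, incidentally, $w(v_1,v_2) = 0$, i.e., there is no edge between them). In particular, move (i) produces a cut of the same weight as $\vopt$ in which $v_1$ and $v_2$ now share block $A$, which is the desired minimum multiterminal cut, and the proof is complete via Lemma~\ref{lem:cont} (the edge, if any, can then also be contracted).

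The main obstacle I anticipate is the careful accounting of the edge $(v_1,v_2)$ itself and of edges to neighbors in $N_{\text{oth}}$: because the hypothesis explicitly excludes $v_2$ from $N(v_1)$ and vice versa, one must handle $(v_1,v_2)$ separately and verify that its contribution appears symmetrically in both $\Delta_1$ and $\Delta_2$. Once the bookkeeping is set up so that the $N_A$/$N_B$ terms cancel in the sum, the identity $\Delta_1 + \Delta_2 = -2w(v_1,v_2)$ does all the work.
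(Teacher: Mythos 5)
Your proof is correct and uses essentially the same single-vertex exchange argument as the paper: the paper moves whichever of $v_1,v_2$ has the larger incident cut weight into the other's block and uses the equal-neighborhood hypothesis to show the cut weight does not increase, which is exactly the content of your identity $\Delta_1+\Delta_2=-2\,w(v_1,v_2)\le 0$ guaranteeing that at least one of the two moves is non-increasing. Your explicit restriction to non-terminal vertices is a point the paper leaves implicit (its w.l.o.g. choice could otherwise attempt to move a terminal out of its block), but at heart both arguments are the same exchange.
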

\begin{proof}
  Let $C$ be a partitioning of the vertices in $G$ with $C(v_1) \not = C(v_2)$, let $\zeta$ be the corresponding cut, where $e=(u,v) \in \zeta$, if $C(u) \neq C(v)$ and let $cw(v)$ be the total weight of edges in $\zeta$ incident to a vertex $v \in V$. W.l.o.g. let $v_2$ be the vertex with $cw(v_2) \geq cw(v_1)$. We analyze this in two steps: We assume that when moving $v_2$ to $C(v_1)$ that all edges incident to $v_2$ in its old location are removed from $\zeta$, which drops the weight of $\zeta$ by $cw(v_2)$ and then all edges incident to $v_2$ in its new location are added to $\zeta$, which is exactly $cw(v_1)$ by the conditions of the lemma. Thus the weight of $\zeta$ changes by $cw(v_1) - cw(v_2) \le 0$. If the edge $e_{12} = (v_1, v_2)$ exists, both $cw(v_1)$ and $cw(v_2)$ are furthermore decreased by $w(e_{12})$, as the edge connecting them is not a cut edge anymore. As we only moved the block affiliation of $v_2$, the only edges newly introduced to $\zeta$ are edges incident to $v_2$. Thus, the total weight of the multiterminal cut was not increased by moving $v_1$ and $v_2$ into the same block and we showed that for each cut $\zeta$, in which $C(v_1) \not = C(v_2)$ there exists a cut of equal or better value in which $v_1$ and $v_2$ are in the same block. Thus, there exists at least one multiterminal cut where $\vopt(v_1) = \vopt(v_2)$. 
\end{proof}

We detect equal neighborhoods for all vertices with neighborhood size smaller or equal to a constant $c_{N}$ using two linear time routines. To detect neighboring vertices $v_1$ and $v_2$ with equal neighborhood, we sort the neighborhood vertex IDs including edge weights by vertex IDs (excluding the respective other vertex) for both $v_1$ and $v_2$ and check for equality. To detect non-neighboring vertices $v_1$ and $v_2$ with equal neighborhood, we create a hash of the neighborhood sorted by vertex ID for each vertex with neighborhood size smaller or equal to $c_{N}$. If hashes are equal, we check whether the condition for contraction is actually fulfilled. As the neighborhoods to sort only have constant size, they can be sorted in constant time and thus the procedures can be performed in linear time. We perform both tests, as the neighborhoods of neighboring vertices contain each other and therefore do not result in the same hash value; and non-neighboring vertices are not in each others neighborhood and therefore finding them requires checking the neighborhood of every neighbor, which results in a large search space. We set $c_{N} = 5$, as most equal neighborhoods we encountered are in vertices with neighborhood size $\leq 5$.

\subsubsection{Maximum Flow from Non-terminal Vertices}

Let $v$ be an arbitrary vertex in $V \backslash T$, \ie a non-terminal vertex of $G$. Let $(V_v, V \backslash V_v)$ be the largest minimum isolating cut of $v$ and the set of terminal vertices $T$. Lemma~\ref{lem:flow} shows that there is at least one minimum multiterminal cut $\cut(G)$ so that $\forall x \in V_v: \vopt(x) = \vopt(v)$ and thus $V_v$ can be contracted into a single vertex.

\begin{lemma} \label{lem:flow}
  Let $v$ be a vertex in $V \backslash T$. Let $(V_v, V \backslash V_v)$ be the largest minimum isolating cut of $v$ and the set of terminal vertices $T$ and let $\lambda(G, v, T)$ be the weight of the minimum isolating cut $(V_v, V \backslash V_v)$. There exists at least one minimum multiterminal cut $\cut(G)$ in which $\forall x \in V_v: \vopt(x) = \vopt(v)$. 
\end{lemma}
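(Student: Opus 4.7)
The plan is to start from an arbitrary minimum multiterminal cut and ``absorb'' $V_v$ into whichever block currently contains $v$, then verify via submodularity of the cut function that the cut weight cannot increase. This mirrors the classical argument of Dahlhaus et al.\ for the source side of a terminal's largest minimum isolating cut, only now applied to a non-terminal $v$.

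Concretely, I would fix any minimum multiterminal cut $\cut$ with blocks $V_1,\ldots,V_k$ (with $t_i\in V_i$) and let $j$ be the index with $v\in V_j$. I then define the candidate partition $V_j':=V_j\cup V_v$ and $V_i':=V_i\setminus V_v$ for $i\neq j$. Validity is immediate: since $T\subseteq V\setminus V_v$, the set $V_v$ contains no terminal, so each $V_i'$ still contains exactly $t_i$, and by construction the resulting multiterminal cut $\cut'$ places all of $V_v$ in the same block as $v$. Writing $f(S):=w(E(S,V\setminus S))$ for the cut function and using the identity $2\,w(\cut)=\sum_i f(V_i)$, the claim $w(\cut')\le w(\cut)$ reduces to showing $f(V_i')\le f(V_i)$ for every block.

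The key ingredient is that any set $S$ with $v\in S$ and $S\cap T=\emptyset$ satisfies $f(S)\ge \lambda(G,v,T)=f(V_v)$, directly from the definition of a minimum isolating cut. For $i=j$, I would invoke submodularity $f(A)+f(B)\ge f(A\cup B)+f(A\cap B)$ with $A=V_v$ and $B=V_j$: the set $V_v\cap V_j$ contains $v$ and no terminal, so $f(V_v\cap V_j)\ge f(V_v)$, and hence $f(V_j')=f(V_v\cup V_j)\le f(V_j)$. For $i\neq j$, I would instead apply submodularity with $A=V_i$ and $B=V\setminus V_v$, using $f(V\setminus V_v)=f(V_v)=\lambda$; the set $V_v\setminus V_i$ still contains $v$ (because $v\in V_j$ and $j\neq i$) and no terminal, so $f(V_v\setminus V_i)\ge\lambda$, and after canceling $\lambda$ one obtains $f(V_i')=f(V_i\setminus V_v)\le f(V_i)$. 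Summing over $i$ and invoking the minimality of $\cut$ forces equality throughout, so $\cut'$ itself is a minimum multiterminal cut of the desired form.

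The step I expect to be most delicate is the second submodularity application: for blocks that shrink, the ``obvious'' choice $B=V_v$ would bound $f(V_i\cup V_v)$ rather than the target $f(V_i\setminus V_v)$. Switching to $B=V\setminus V_v$ is what aligns the intersection with $V_i'$ while still producing a ``complement'' term that contains $v$ and avoids every terminal, which is precisely what licenses the isolating-cut lower bound. The \emph{largest} qualifier on the minimum isolating cut is not strictly required for the weight inequality to close, but it is what makes $V_v$ the canonical maximal set that can be contracted.
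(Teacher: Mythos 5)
Your proof is correct, but it verifies the key inequality by a different route than the paper. The paper's proof performs the same modification you do (it moves the offending set $V_C=\{x\in V_v:\vopt(x)\neq\vopt(v)\}$ into $v$'s block, which is exactly your passage from $\{V_i\}$ to $\{V_i'\}$), but it accounts for the weight change globally and elementarily: the edges leaving the cut have weight at least $w(V_C,V_v\setminus V_C)$, the edges entering have weight at most $w(V_C,V\setminus V_v)$, and the single inequality $w(V_C,V_v\setminus V_C)\ge w(V_C,V\setminus V_v)$ follows directly from the minimality of the isolating cut (otherwise $V_v\setminus V_C$ would be a strictly better source side). You instead write $2\,w(\cut)=\sum_i f(V_i)$ and prove the stronger, per-block statement $f(V_i')\le f(V_i)$ via two submodular uncrossings, using the lower bound $f(S)\ge\lambda(G,v,T)$ for every $S$ containing $v$ and no terminal. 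Both arguments hinge on the same underlying fact; your version is a finer-grained ``uncrossing'' presentation that yields monotonicity of every block boundary (and generalizes cleanly to other exchange arguments of this type), while the paper's is shorter and avoids invoking submodularity as a black box. Your choice of $B=V\setminus V_v$ for the shrinking blocks is the right one and the cancellation works as you describe; your remark that the ``largest'' qualifier is not needed for the inequality, only for maximality of the contractible set, also matches the paper, whose proof likewise uses only minimality of the isolating cut.
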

\begin{proof}
  As $(V_v, V \backslash V_v)$ is a minimum isolating cut with the terminal set as sinks, we know that no terminal vertex is in $V_v$. Assume that $\cut(G)$ cuts $V_v$, \ie there is a non empty vertex set $V_C \in V_v$ so that $\forall x \in V_C: \vopt(x) \not\in \vopt(v)$. We will show that the existance of such a vertex set contradicts the minimality of $\cut(G)$. Figure~\ref{fig:flow} gives an illustration of the vertex sets defined here.

    \begin{figure}
      \centering
      \includegraphics[width=5cm]{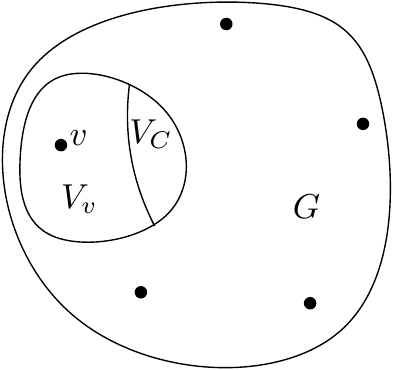}
      \caption{\label{fig:flow} Illustration of vertex sets in Lemma~\ref{lem:flow}}
    \end{figure}

    Due to the minimality of the minimum isolating cut, we know that $w(V_C, V_v \backslash V_C) \geq w(V_C, V \backslash V_v)$ (i.e. the connection of $V_C$ to the rest of $V_v$ is at least as strong as the connection of $V_C$ to $(V \backslash V_v)$), as otherwise we could remove $V_C$ from $V_v$ and find an isolating cut of smaller size.
  
  We now show that by changing the block affiliation of all vertices in $V_C$ to $\vopt(v)$, \ie removing all vertices from the set $V_C$, we can construct a multiterminal cut of equal or better cut value. By changing the block affiliation of all vertices in $V_C$ to $\vopt(v)$, we remove all edges connecting $V_C$ to $(V_v \backslash V_C)$ from $\cut(G)$ and potentially more, if there were edges in $\cut(G)$ that connect two vertices both in $V_C$. At most, the edges connecting $V_C$ and $(V \backslash V_v)$ are newly added to $\cut(G)$. As $w(V_C, V_v \backslash V_C) \geq w(V_C, V \backslash V_v)$, the cut value of $\cut(G)$ will be equal or better than previously. Thus, there is at least one multiterminal cut in which $V_C$ is empty and therefore $\forall x \in V_v: \vopt(x) = \vopt(v)$.
\end{proof}

We can therefore run a maximum $s$-$T$-flow from a non-terminal vertex to the set of all terminals $T$ and contract the source side of the largest minimum isolating cut into a single vertex. These flow problems can be solved embarassingly parallel, in which every processor solves an independent maximum $s$-$T$-flow problem for a different non-terminal vertex $v$.

While it is possible to run a flow problem from every vertex in $V$, this is obviously not feasible as it would entail excessive running time overheads. Promising vertices to use for maximum flow computations are either high degree vertices or vertices with a high distance from every terminal. High degree vertices are promising, as due to their high degree it is more likely that we can find a minimum isolating cut of size less than their degree. Vertices that have a high distance to all terminals are on 'the edge of the graph', potentially in a subgraph only weakly connected to the rest of the graph. Running a maximum flow then allows us to contract this subgraph. In every iteration, we run $5$ flow problems starting from high-distance vertices and $5$ flow problems starting from high-degree vertices. 

{}

\subsection{Vertex Branching}

When the \vcbase{}-algorithm is initialized, it only has a single problem containing the whole graph $G$. While independent minimum isolating cuts are computed in parallel, most of the shared-memory parallelism in \vcbase{} comes from the embarassingly parallel solving of different problems on separate threads. When branching, \vcbase{} selects the highest degree vertex that is adjacent to a terminal and branches on the heaviest edge connecting it to one of the terminals. The algorithm thus creates only up to two subproblems and is still not able to use the whole machine. 

We propose a new branching rule that overcomes these limitations by selecting the highest degree vertex incident to at least one terminal and use it to create multiple subproblems to allow for faster startup. Let $x$ be the vertex used for branching, $\{t_1,\dots,t_i\}$ for some $i \geq 1$ be the adjacent terminals of $x$ and $w_M$ be the weight of the heaviest edge connecting $x$ to a terminal. We now create up to $i + 1$ subproblems as follows: 

For each terminal $t_j$ with $j \in \{1,\dots,i\}$ with $w(x, t_j) + w(x, V \backslash T) > w_M$ create a new problem $P_j$ where edge $(x, t_j)$ is contracted and all other edges connecting $x$ to terminals are deleted. Thus in problem $P_j$, vertex $x$ belongs to block $\vopt(t_j)$. If $w(x, t_j) + w(x, V \backslash T) \leq w_M$, \ie the weight sum of the edges connecting $x$ with $t_j$ and all non-terminal vertices is not heavier than $w_M$, the assignment to block $\vopt(t_j)$ cannot be optimal and thus we do not need to create the problem $P_j$, also called \emph{pruning} of the problem:
\begin{lemma}
Let $G = (V,E)$ be a graph, $T \subseteq V$ be the set of terminal vertices in $G$, and $x \in V$ be a vertex that is adjacent to at least one terminal and for an $i \in \{1,\dots,|T|\}$ be the index of the terminal for which $e_i=(x,t_i)$ is the heaviest edge connecting $x$ with any terminal. Let $w_M$ be the weight of $e_i$. If there exists a terminal $t_j$ adjacent to $x$ with $j \in \{1,\dots,
|T|\}$ with $w(x, t_j) + w(x, V \backslash T) \geq w_M$, there is at least one minimum multiterminal cut $\cut(G)$ so that $\vopt(x) \not = j$, \ie $x$ is not in block $j$. 
\end{lemma}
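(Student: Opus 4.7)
The statement is established by an exchange argument. First, a note on the hypothesis: as written, $w(x,t_j) + w(x, V\setminus T) \ge w_M$, it is inconsistent with the paragraph preceding the lemma (which introduces the same pruning condition with $\le w_M$) and is in fact false as stated --- taking $j = i$ satisfies the written hypothesis trivially but the conclusion would then forbid assigning $x$ to the block of the heaviest-adjacent terminal, which is plainly wrong. I therefore read the inequality as $\le w_M$ (matching the text above the lemma) and prove that intended version; no variant of the proof closes under the $\ge$ direction.

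The plan: fix any minimum multiterminal cut $\cut(G)$ with partition $\vopt$ and suppose $\vopt(x) = j$. Assume first that $j \ne i$. Define $\vopt'$ to differ from $\vopt$ only by reassigning $x$ from block $j$ to block $i$, where $t_i$ is the terminal joined to $x$ by the heaviest terminal edge, of weight $w_M$. Since no other vertex moves, only edges incident to $x$ can change cut status, and the net change in cut weight is the quantity I must control.

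I would compute it by case analysis on the neighbors of $x$. Among terminal neighbors, only $(x, t_j)$ (was not cut since both endpoints lay in block $j$, now cut; contribution $+w(x,t_j)$) and $(x, t_i)$ (was cut, now both endpoints lie in block $i$; contribution $-w_M$) change status; every other terminal edge $(x, t_k)$ remains cut. Among non-terminal neighbors $u$, the edge $(x,u)$ newly enters the cut exactly if $\vopt(u) = j$ and newly leaves exactly if $\vopt(u) = i$, so the net non-terminal contribution is at most $\sum_{u:\vopt(u)=j,\,u\notin T} w(x,u) \le w(x, V \setminus T)$. Summing yields $w(\vopt') - w(\vopt) \le w(x,t_j) + w(x, V\setminus T) - w_M \le 0$ by (the intended reading of) the hypothesis, so $\vopt'$ is itself a minimum multiterminal cut with $\vopt'(x) = i \ne j$, as required. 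For the edge case $j = i$, the (corrected) hypothesis collapses to $w(x, V\setminus T) \le 0$, so $x$ has no non-terminal neighbors; reassigning $x$ to any other terminal's block then changes the cut weight by $w(x,t_k) - w_M \le 0$, still producing an optimal cut with $\vopt'(x) \ne j$. The principal obstacle is the inequality-direction issue flagged above; once that is read as $\le$, the argument reduces to a short balance of contributions on the edges incident to $x$.
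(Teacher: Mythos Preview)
Your diagnosis of the inequality direction is correct: the lemma and the paper's own proof both carry the $\ge$ typo, while the surrounding text and the logic of the argument require $\le$. Your main exchange argument for $j\neq i$ is essentially the paper's proof. The paper phrases it as comparing an upper bound $E(x)-w_M$ on the incident cut weight when $x$ is placed in block~$i$ with a lower bound $E(x)-(w(x,V\setminus T)+w(x,t_j))$ when $x$ is placed in block~$j$; your edge-by-edge accounting of the change when moving $x$ from block~$j$ to block~$i$ is the same computation, arguably cleaner.

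One genuine slip: your treatment of the degenerate case $j=i$ has the sign reversed. Moving $x$ from block~$i$ to some other block~$k$ turns $(x,t_i)$ from non-cut to cut ($+w_M$) and $(x,t_k)$ from cut to non-cut ($-w(x,t_k)$), so the change is $w_M - w(x,t_k)\ge 0$, not $w(x,t_k)-w_M$. In fact the corrected lemma is \emph{false} for $j=i$ when $t_i$ is the strictly heaviest adjacent terminal and $w(x,V\setminus T)=0$: then block~$i$ is the unique optimal assignment for $x$ and no minimum cut has $\vopt(x)\neq i$. The paper does not address $j=i$ either; the intended reading is $j\neq i$, which is also the only case the branching rule ever uses for pruning (and a vertex with only terminal neighbors would have been eliminated by the heavy-edge reduction before branching). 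So simply drop your $j=i$ paragraph and note that the claim is meant for $j\neq i$.
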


\begin{proof}
  If $\vopt(x) = i$, \ie $x$ is in the block of the terminal it has the heaviest edge to, the sum of cut edge weights incident to $x$ is $\leq E(x) - w_M$, as edge $e_i$ of weight $w_M$ is not a cut edge in that case. If $\vopt(x) = j$, \ie $x$ is in the block of terminal $j$, the sum of cut edge weights incident to $x$ is $\geq E(x) - (w(x, V \backslash T) + w(x, t_j))$, as all edges connecting $x$ with other terminals than $t_j$ are guaranteed to be cut edges. As $w(x, t_j) + w(x, V \backslash T) \geq w_M$, even if all non-terminal neighbors of $x$ are in block $j$, the weight sum of incident cut edges is not lower than when $x$ is placed in block $i$. As the block affiliation of $x$ can only affect its incident edges, the cut value of every solution that sets $\vopt(x) = j$ would be improved or remain the same by setting $\vopt(x) = i$.
\end{proof}

If $w(x, V \backslash T) > w_M$ and $i < |T|$, we also create problem $P_{i+1}$, in which all edges connecting $x$ to a terminal are deleted. This problem represents the assignment of $x$ to a terminal that is not adjacent to it. We add each subproblem whose lower bound is lower than the currently best found solution $\bestwgt$ to the problem queue $\queue$. As we create up to $|T|$ subproblems, this allows for significantly faster startup of the algorithm and allows us to use the whole parallel machine after less time than before.

\subsection{Integer Linear Programming}

Integer Linear Programming can be used as an alternative to branch-and-reduce~\cite{henzinger2020shared} and for some problems this is faster than branch-and-reduce. We integrate the ILP formulation from the work of~\cite{henzinger2020shared} and include it directly into \vcbase{} as an alternative to branching. We give the ILP solver a time limit and if it is unable to find an optimal solution within the time limit, we instead perform a branch operation. In Section~\ref{ss:exp_ilp} we study which subproblems to solve with an ILP first.

\subsection{Improving Bounds with Greedy Optimization}
\label{ss:greedy}
The \vcbase{} algorithm prunes problems which can't result in a solution which is better than the best solution found so far. Therefore, even though it is a deterministic algorithm that will output the optimal result when it terminates, performing greedy optimization on intermediate solutions allows for more aggressive pruning of problems that cannot be optimal. Additionally, \vcbase{} has reductions that depend on the value of $\bestwgt(G)$ and can thus contract more vertices if the cut value $\bestwgt(G)$ is lower.

For a subproblem $H=(V_H, E_H)$ with solution $\rho$, the original graph $G = (V_G, E_G)$ and a mapping $\pi: V_G \rightarrow V_H$ that maps each vertex in $V_G$ to the vertex in $V_H$ that encompasses it, we can transfer the solution $\rho$ to a solution $\gamma$ of $G$ by setting the block affiliaton of every vertex $v \in V_G$ to $\gamma(v) := \pi(\rho(v))$. The cut value of the solution $w(\gamma)$ is defined as the sum of weights of the edges crossing block boundaries, \ie the sum of edge weights where the incident vertices are in differnet blocks. Let $\xi_i(V_G)$ be the set of all vertices $v \in V_G$ where $\gamma(v) = i$.

We introduce the following greedy optimization operators that can transform $\gamma$ into a better multiterminal cut solution $\gamma_{\text{IMP}}$ with $w(\gamma_{\text{IMP}}) < w(\gamma)$.

\subsubsection{Kernighan-Lin Local Search}

Kernighan and Lin~\cite{lin1973effective} give a heuristic for the traveling-salesman problem that has been adapted to many hard optimization problems~\cite{sandersschulz2013,traff2006direct,xu2005survey,dorigo2006ant}, where each vertex $v \in V_G$ is assigned a gain $g(v) = \max_{i \in \{i,\dots,|T|\}, i \not = \gamma(v)} \sum w(v, \xi_i(V_G)) - w(v, \xi_{\gamma(v)}(V_G))$, \ie the improvement in cut value to be gained by moving $v$ to another block, the best connected other block. We perform runs where we compute the gain of every vertex that has at least another neighbor in a different block and move all vertices with non-negative gain. Additionally, if a vertex $v$ has a negative gain, we store its gain and associated best connected other block. For any neighbor $u$ of $v$ that also has the same best connected other block, we check whether $g(w) + g(v) + 2 \cdot w(v, u) > 0$, \ie moving both $u$ and $v$ at the same time is a positive gain move. If it is, we perform the move.

\subsubsection{Pairwise Maximum Flow}

For any pair of blocks $1 \leq i < j \leq |T|$ where $w(\xi_i(V_G), \xi_j(V_G)) > 0$, \ie there is at least one edge from block $i$ to block $j$, we can create a maximum $s$-$t$ flow problem between them: we create a graph $F_{ij}$ that contains all vertices in $\xi_i(V_G)$ and $\xi_j(V_G)$ and all edges that connect these vertices. 

Let $H$ be the current problem graph created by performing reductions and branching on the original graph $G$. All vertices that are encompassed in the same vertex in problem graph $H$ as the terminals $i$ and $j$ are hereby contracted into the corresponding terminal vertex. We perform a maximum $s$-$t$-flow between the two terminal vertices and re-assign vertex assignments in $\gamma$ according to the minimum $s$-$t$-cut between them. As we only model blocks $\xi_i(V_G)$ and $\xi_j(V_G)$, this does not affect other blocks in $\gamma$. In the first run we perform a pairwise maximum flow between every pair of blocks $i$ and $j$ where $w(\xi_i(V_G), \xi_j(V_G)) > 0$ in random order. We continue on all pairs of blocks where $w(\xi_i(V_G), \xi_j(V_G))$ was changed since the end of the previous maximum flow iteration between them.

We first perform Kernigham-Lin local search until there is no more improvement, then pairwise maximum flow until there is no more improvement, followed by another run of Kernigham-Lin local search. As pairwise maximum flow has significantly higher running time, we spawn a new thread to perform the optimization if there is a CPU core that is not currently utilized.

\section{Fast Inexact Solving}
\label{s:inexact}

\vcbase{} in an exact algorithm, \ie when it terminates the output is guaranteed to be optimal. As the multiterminal cut problem is NP-complete~\cite{dahlhaus1994complexity}, it is not feasible to expect termination in difficult instances of the problem. Henzinger~\etal\cite{henzinger2020shared} report that their algorithm often does not terminate with an optimal result but runs out of time or memory and returns the best result found up to that point. Thus, it makes sense to relax the optimality constraint and aim to find a high-quality (but not guaranteed to be optimal) solution faster.

\begin{figure} 
  \centering
  \includegraphics[width=.35\textwidth]{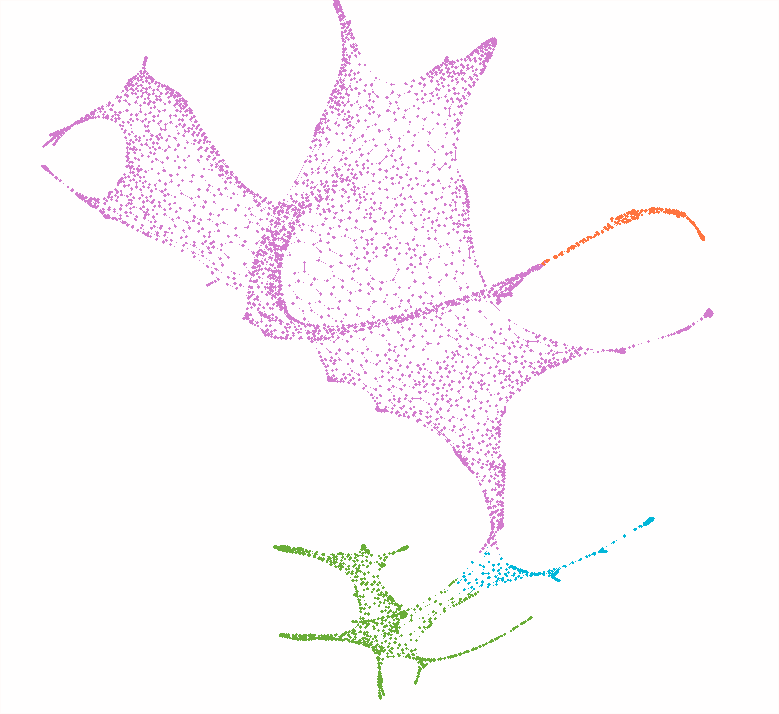}
  \includegraphics[width=.35\textwidth]{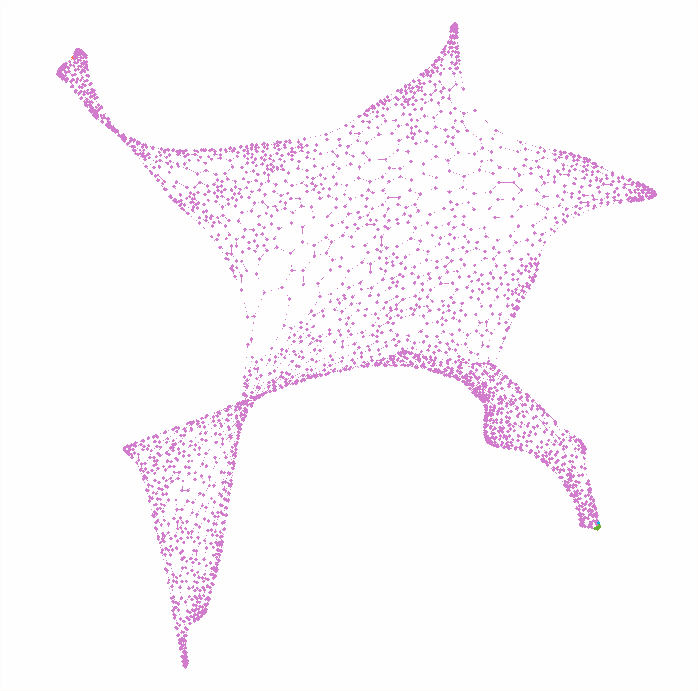}
  \caption{\label{fig:uk} Minimum multiterminal cut for graph \texttt{uk}~\cite{soper2004combined} and four terminals - on complete graph (left) and remaining graph at time of first branch operation (right), visualized using Gephi-0.9.2~\cite{bastian2009gephi}}
\end{figure}

A key observation herefor is that in many problems, most, if not all vertices that are not already contracted into a terminal at the time of the first branch, will be assigned to a few terminals whose weighted degree at that point is highest. See Figure~\ref{fig:uk} for an example with $4$ terminals (selected with high distance to each other) on graph \texttt{uk} from the Walshaw Graph Partitioning Archive~\cite{soper2004combined}. As we can see, at the time of the first branch (right figure), most vertices that are not assigned to the pink terminal in the optimal solution are already contracted into their respective terminals. The remainder is mostly assigned to a single terminal. As we can observe similar behavior in many problems, we propose the following heuristic speedup operations:

Let $\delta \in (0, 1)$ be a contraction factor and $T_H$ be the set of all terminals that are not yet isolated in graph $H$. In each branching operation on an intermediate graph $H$, we delete all edges around the $\lceil \delta \cdot |T_H| \rceil$ terminals with lowest degree. Additionally, we contract all vertices adjacent to the highest degree terminal that are not adjacent to any other terminal into the highest degree terminal. This still allows us to find all solutions in which no more vertices were added to the lowest degree terminals and the adjacent vertices are in the same block as the highest degree terminals.

Additionally, in a branch operation on vertex $v$, we set a maximum branching factor $\beta$ and only create problems where $v$ is contracted into the $\beta$ adjacent terminals it has the heaviest edges to and one problem in which it is not contracted into either adjacent terminal. This is based on the fact that all other edges connecting $v$ to other terminals will be part of the multiterminal cut and the greedy assumption that it is likely that the optimal solution does not contain at least one of these heavy edges. By default, we set $\delta = 0.1$ and $\beta = 5$.

\section{Experiments and Results} \label{s:experiments}

We now perform an experimental evaluation of the proposed work.
This is done in the following order: first we analzye the impact of different reductions introduced in the work of Henzinger~\etal\cite{henzinger2020shared} and in this work. We then analyze which subproblems to solve using integer linear programming and then compare the results of \vcbase{} with our exact and inexact algorithms on a variety of graphs from different sources. Here, \vcbase{} denotes the algorithm of Henzinger~\etal\cite{henzinger2020shared}, \exact{} denotes the exact version of our algorithm and \inexact{} denotes the heuristic algorithm proposed in Section~\ref{s:inexact}

We implemented the algorithms using \CC-17 and compiled all code using g++ version 7.3.0 with full optimization (\texttt{-O3}). Our experiments are conducted on two machine types. Machine A is a machine with two Intel Xeon E5-2643v4 with $3.4$ GHz with $6$ CPU cores each and $1.5$ TB RAM in total. Machine B is a machine in \ifDoubleBlind{\emph{name of cluster removed, as this is double blind}}\else{the Vienna Scientific Cluster}\fi{} with two Intel Xeon E5-2650v2 with $2.6$GHz with $8$ CPU cores each and $64$ GB RAM in total. We limit the maximum amount of memory used for each problem to $32$ GB. ILP problems are solved using Gurobi 8.1.0. When we report a mean result we give the geometric mean as problems differ significantly in cut size and time.  Our code is freely available under the permissive MIT license\ifDoubleBlind{\footnote{Link removed, as this submission is double blind}}\else{\footnote{\url{https://github.com/alexnoe/VieCut}}\fi{}. 

To evaluate the performance of different multiterminal cut algorithms, we use a wide variety of graphs from different sources. We re-use a large subset of the map, social and web graphs graphs used by Henzinger~\etal\cite{henzinger2020shared}. Additionally, we add numerical graphs from the Walshaw Graph Partitioning Benchmark~\cite{soper2004combined} and a set of graphs from the $10^{th}$ DIMACS implementation challenge~\cite{bader2013graph} and the SuiteSparse Matrix Collection (formerly UFSparse Matrix Collection)~\cite{davis2011university}. Table~\ref{t:graphs} gives an overview over the graphs used in this work. A table with properties of the instances be found in Appendix~\ref{app:instances}.

As the instances generally do not have any terminals, we find random vertices that have a high distance from each other in the following way: we start with a random vertex $r$, run a breadth-first search starting at $r$ and select the vertex $v$ encountered last as first terminal. While the number of terminals is smaller than desired, we add another terminal by running a breadth-first search from all current terminals and adding the vertex encountered last to the list of terminals. We then run a bounded-size breadth-first search around each terminal to create instances where the minimum multiterminal cut does not have $k - 1$ blocks consisting of just a single vertex each. This results in problems in which well separated clusters of vertices are partitioned and the task consists of finding a partitioning of the remaining vertices in the boundary regions between already partitioned blocks. This relates to clustering tasks, in which well separated clusters are labelled and the task consists of labelling the remaining vertices inbetween. Additionally, we use a subset of the generated instances of Henzinger~\etal\cite{henzinger2020shared} to compare our work to \vcbase{}. 

In order to compare different algorithms, we use \emph{performance profiles}~\cite{dolan2002benchmarking}.
These plots relate the cut size of all algorithms to the corresponding cut size produced by each algorithm.
More precisely, the $y$-axis shows $\#\{\text{objective} \leq \tau * \text{best} \} / \# \text{instances}$, where objective corresponds to
the result of an algorithm on an instance and best refers to the best result of any algorithm shown within the plot.
The parameter $\tau\geq 1$ in this equation is plotted on the $x$-axis.
For each algorithm, this yields a non-decreasing, piecewise constant function.
Thus, if we are interested in the number of instances where an algorithm is the best, we only need to look at $\tau = 1$.

\subsection{Reductions}
\label{ss:exp_red}

\begin{figure}[t!]
  \centering
  \includegraphics[width=.95\textwidth]{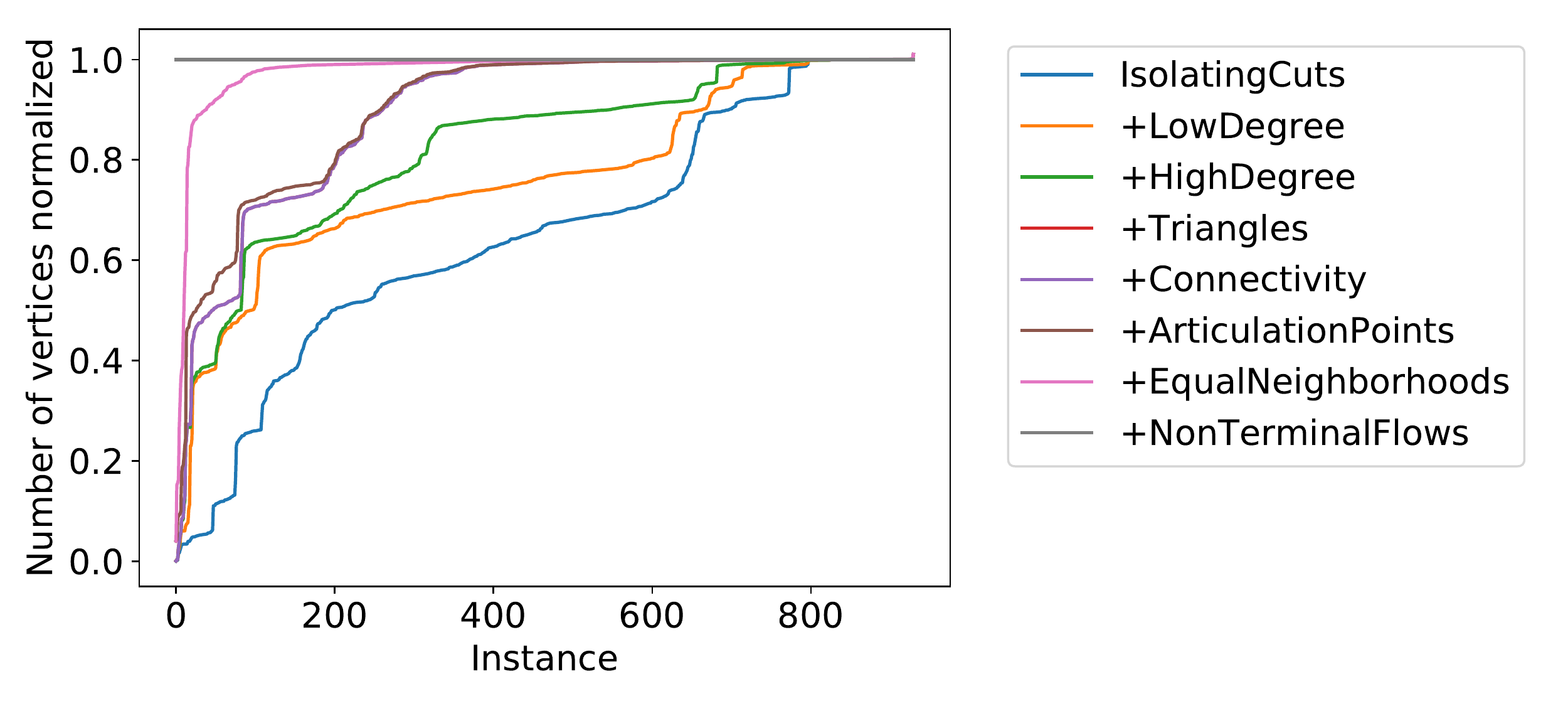}
  \caption{\label{fig:kernel} Number of vertices in graph after reductions are finished, normalization by (\# vertices remaining with all reductions / \# vertices remaining in variant) and sorted by normalized value.}
\end{figure}

We first analyze the impact of the different reductions on the size of the graph at the time of first branch. For this, we run experiments on all graphs in Appendix~\ref{app:instances} with $k=\{4,8,10\}$ terminals and $10\%$ of all vertices added to the terminals on machine B. On these instances, we run subsets of all contractions exhaustively and check which factor of vertices remain in the graph. A value of $1$ thus indicates that the reductions were unable to find any edges to contract, a value close to $0$ shows that almost no vertices remain and the resulting problem is far smaller than the original problem. Figure~\ref{fig:kernel} gives result with $8$ different variants, starting with a version that only runs isolating cuts and adding one reduction family per version. For this, we sorted the reductions by their impact on the total running time. In Figure~\ref{fig:kernel}, we can see that using all reductions allows us to reduce the number of vertices by more than half in about half of all instances and can find a sizable number of reductions on almost any instance. 

We can see that running the local reductions in \vcbase{} are very effective on almost all instances. In average, \texttt{IsolatingCuts} reduce the number of vertices by $33\%$, \texttt{LowDegree} reduces the number of vertices in the remaining graph by $17\%$, \texttt{HighDegree} by $7\%$ and \texttt{Triangles} by $8\%$. In contrast, \texttt{Connectivity} only has a negligible effect, which can be explained by the fact that it contracts edges whose connectivity is larger than a value related to the difference of upper bound to total weight of deleted edges. As there are almost no deleted edges in the beginning, this value is very high and almost no edge has high enough connectivity.

Out of the new reductions that are not part of \vcbase, all find a significant amount of contractible edges on the graphs already contracted by the reductions included therein. In average, \texttt{ArticulationPoints} reduces the number of vertices on the already contracted graphs by $1.9\%$, \texttt{EqualNeighborhoods} reduces the number of vertices by $7.8\%$ and \texttt{NonTerminalFlows} reduces the number of vertices by $2.0\%$. However, there are some instances in which the newly introduced reductions reduce the number of vertices remaining by more than $99\%$.

\subsection{Integer Linear Programming}
\label{ss:exp_ilp}

In order to get all a wide variety of ILP problems, we run the \inexact{} algorithm on all instances in Appendix~\ref{app:instances} with $k=10$ terminals and $10\%$ of vertices added to the terminals on machine B. As \inexact{} removes low-degree terminals and contracts edges, we have subproblems with very different sizes and numbers of terminals. In this experiment, whenever \inexact{} chooses between branching and ILP on graph $G$, we select a random integer $r \in (1, $\numprint{200000}$)$. We use this random integer, as we want to have problems of all different sizes and using a hard limit would result in many instances just barely below that size limit. We select $200000$ edges as the maximum, as we did not encounter any larger instances in which the ILP was solved to optimality in the allotted time. If $|E| < r$, the problem is solved with ILP, otherwise the algorithm branches on a vertex incident to a terminal. The timeout is set to $60$ seconds.

\begin{figure}[t!]
  \centering
  \includegraphics[width=.95\textwidth]{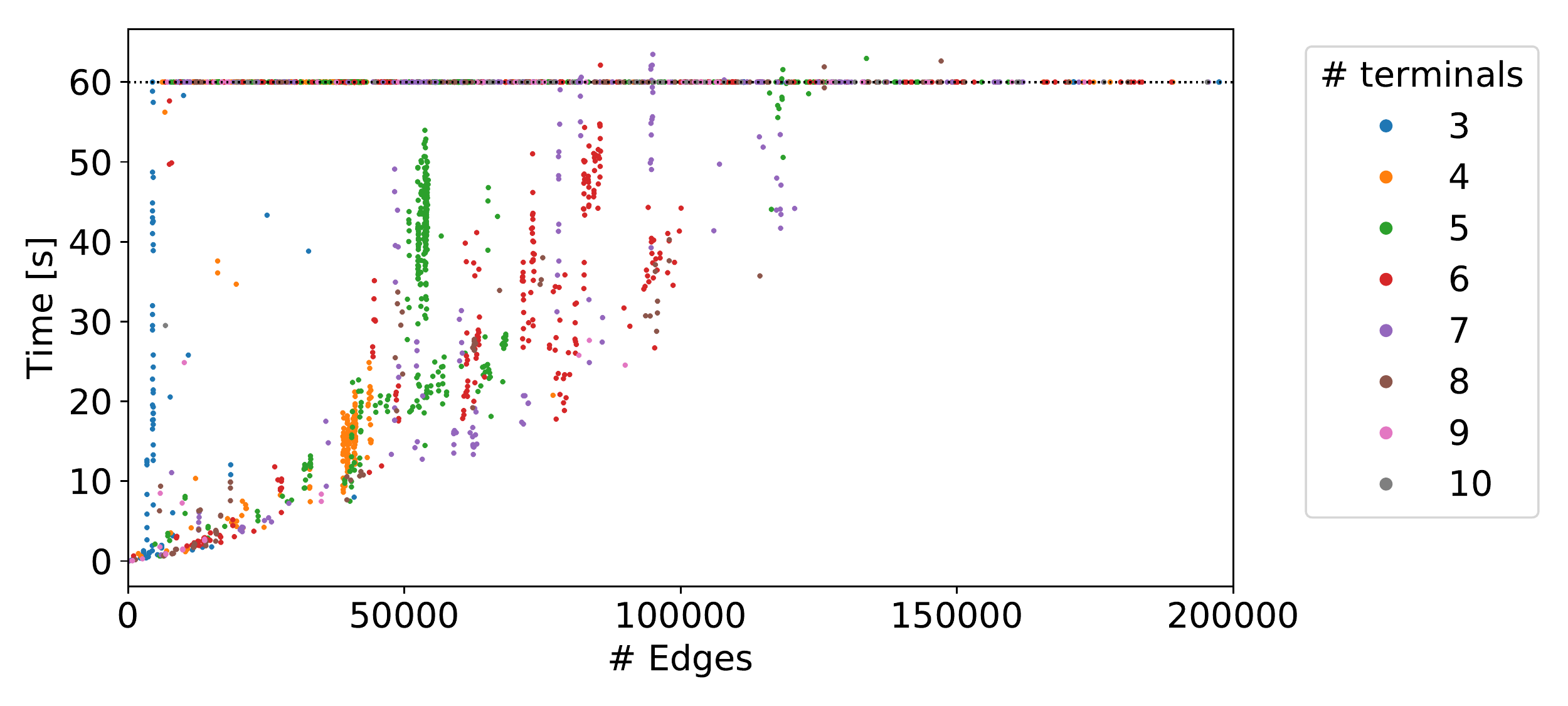}
  \caption{\label{fig:ilp} Running time of ILP subproblems in relation to $|E|$.}  
\end{figure}

Figure~\ref{fig:ilp} shows the time needed to solve the ILP problems in relation to the number of edges in the graph. We can see that there is a strong correlation between problem size and total running time, but there are still a large number of outliers that cannot be solved in the allotted time even though the instances are rather small. In the following, we set the limit to \numprint{50000} edges and solve all instances with fewer than \numprint{50000} edges with an integer linear program. If the instance has at least \numprint{50000} edges, we branch on a vertex incident to a terminal and create more subproblems.

\subsection{Comparison to \vcbase}
\label{ss:cpo}

We use the experiment of Section~$8.7$ in the work of Henzinger~\etal\cite{henzinger2020shared} to compare \exact{} to \vcbase{} on the instances used in their work. The experiment uses a set of large social and web graphs with pre-defined clusters and $k=\{3,4,5,8\}$ terminals, where $10-25\%$ of vertices are marked as terminal vertices initially, a total of $160$ instances. We run the experiments on machine A using all $12$ cores and set the time limit to $600$ seconds. 

Out of $160$ instances, \vcbase{} terminates with an optimal result in $32$ instances, while \exact{} terminates with an optimal result in $46$ instances. Out of the $115$ instances that were not solved to optimality by both algorithms, \exact{} gives a better result on $75$ instances and the same result on the other $38$ instances. The geometric mean of results given by \exact{} and \inexact{} are both about $1.5\%$ lower than \vcbase{}. Note that in the experiments performed in~\cite{henzinger2020shared}, which uses a larger machine ($32$ cores) and has a timeout of $3600$ seconds, \vcbase{} has a geometric mean of about $0.1\%$ better than \vcbase{} in this work. The largest part of the improvement of \exact{} and \inexact{} over \vcbase{} is gained by the greedy optimization detailed in Section~\ref{ss:greedy}.
\begin{table}[b!] \centering
  \vspace*{.5cm}
  \caption{\label{t:probs} Result overview for large multiterminal cut problems on graphs from Appendix~\ref{app:instances}.}
  \begin{tabular}{llrrr}
    \toprule
    \# Terminals & & \vcbase & \exact & \inexact \\
    \midrule
    4 & Best Solution & \numprint{109} & \textbf{\numprint{183}} & \numprint{175} \\
    & Mean Solution & \numprint{161799} & \textbf{\numprint{159402}} & \numprint{159499} \\
    & Better Exact & \numprint{6} & \textbf{\numprint{94}} & --- \\
    \midrule
    5 & Best Solution & \numprint{81} & \textbf{\numprint{173}} & \numprint{158} \\
    & Mean Solution & \numprint{216191} & \textbf{\numprint{210928}} & \numprint{211090} \\
    & Better Exact & \numprint{6} & \textbf{\numprint{121}} & --- \\
    \midrule
    8 & Best Solution & \numprint{42} & \numprint{139} & \textbf{\numprint{175}} \\
     & Mean Solution & \numprint{346509} & \numprint{331112} & \textbf{\numprint{330856}} \\
     & Better Exact & \numprint{2} & \textbf{\numprint{162}} & --- \\
    \midrule
    10 & Best Solution & \numprint{37} & \numprint{129} & \textbf{\numprint{173}} \\
     & Mean Solution & \numprint{412138} & \numprint{392561} & \textbf{\numprint{391822}} \\
     & Better Exact & \numprint{1} & \textbf{\numprint{165}} & --- \\
    \bottomrule
    
  \end{tabular}
    \vspace*{.5cm}
\end{table}

Figure~\ref{fig:pp53} shows the performance profile of this experiment. We can see that both \exact{} and \inexact{} are almost always optimal or very close to it. In contrast, \vcbase{} gives noticably worse results on about $20\%$ of instances and more than $5\%$ worse results on $10\%$ of all instances. 

\subsection{Large Multiterminal Cut Problems}
\label{ss:mtc}

\begin{figure}[t!]
  \centering
  \begin{subfigure}{.65\textwidth}
    \includegraphics[width=\linewidth]{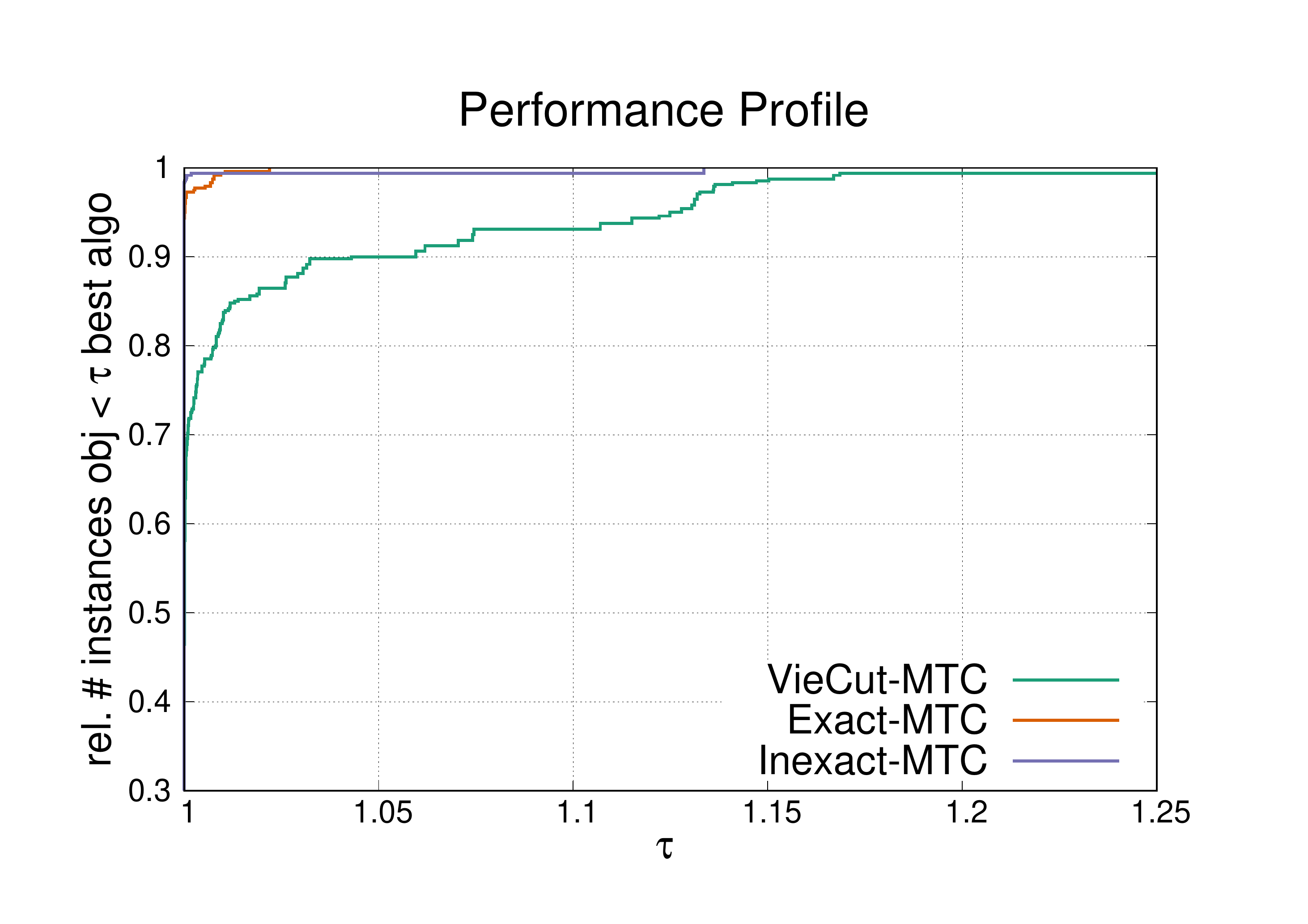}
    \subcaption{\label{fig:pp53} Instances of Section~\ref{ss:cpo}}
  \end{subfigure}\\%
  \begin{subfigure}{.65\textwidth}
    \includegraphics[width=\linewidth]{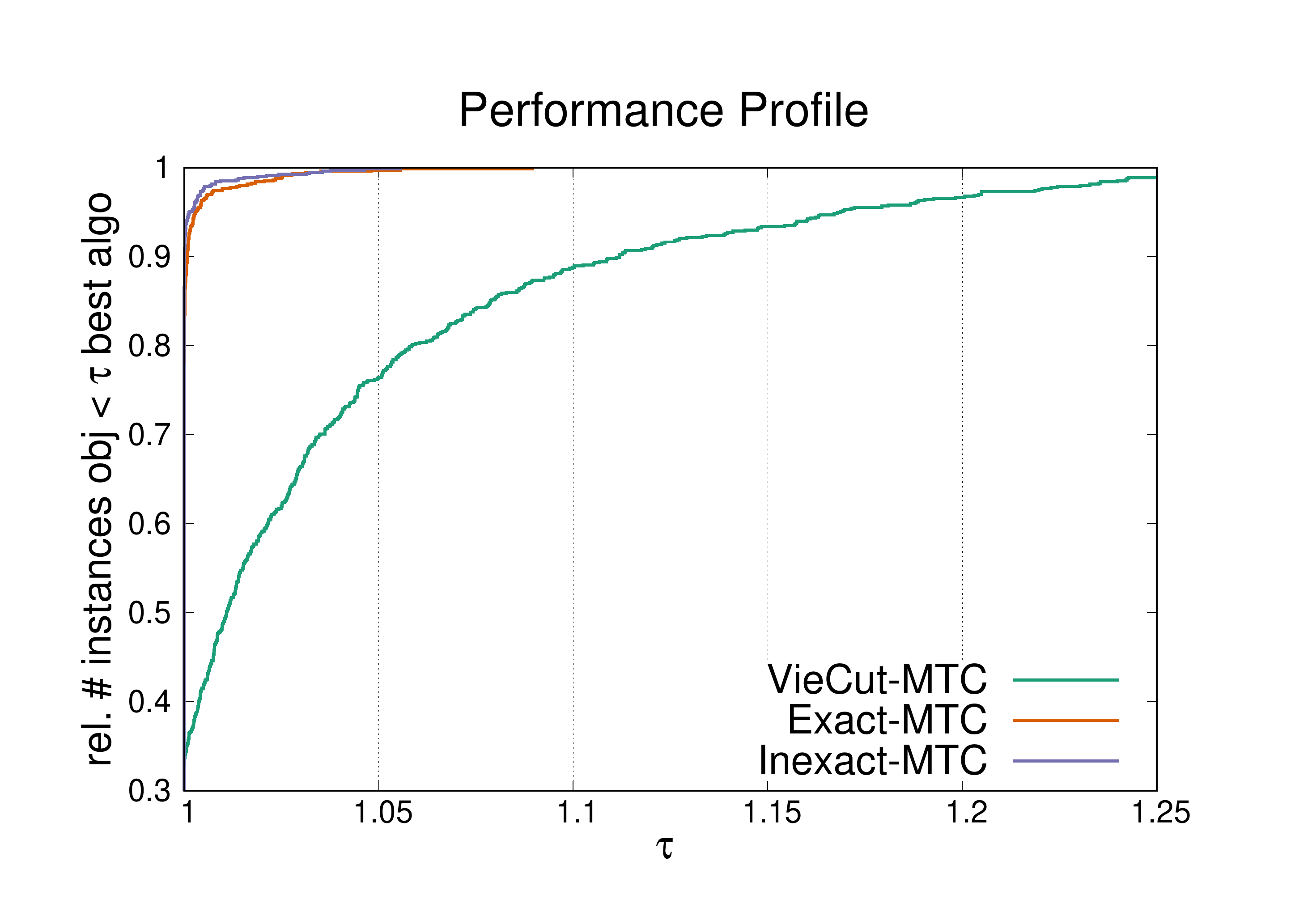}
    \subcaption{\label{fig:pp54} Instances of Section~\ref{ss:mtc}}
  \end{subfigure}
  \caption{Performance profiles for multiterminal cut algorithms}
  \vspace*{.5cm}
\end{figure}

We compare \vcbase{}, \exact{} and \inexact{} on all graphs with $k=\{4,5,8,10\}$ terminals and $10\%$ and $20\%$ of vertices added to the terminal. For each combination of graph, number of terminals and factor of vertices in terminal, we create three problems with random seeds $s=\{0,1,2\}$. Thus, we have a total of $816$ problems. We set the time limit per algorithm and problem to $600$ seconds. We run the experiment on machine A using all $12$ CPU cores. If the algorithm does not terminate in the allotted time or memory limit, we report the best intermediate result. Note that is a soft limit, in which the algorithm finishes the current operation and aims to terminate gracefully if the time or memory limit is reached. As many of these are very large instances, most instances in this section are not solved to optimality.

Table~\ref{t:probs} gives an overview of the results. For each algorithm, we give the number of times, where it gives the best (or shared best) solution over all algorithms; the geometric mean of the cut value; and for \vcbase{} and \exact{} the number of instances in which they have a better result than the respective other. In all instances, in which \vcbase{} and \exact{} terminate with the optimal result, \inexact{} also gives the optimal result. We can see that in the problems with $4$ and $5$ terminals, \exact{} slightly outperforms \inexact{} both in number of best results and mean solution value. In the problems with $8$ and $10$ terminals, \inexact{} has slightly better results in average. Thus, disregarding the optimality constraint can allow the algorithm to give better solutions faster especially in hard problems with a large amount of terminals. 

However, both new algorithms outperform \vcbase{} on almost all instances where not all algorithms give the same result. Here, \exact{} gives a better result than \vcbase{} in $66\%$ of all instances, while \vcbase{} gives the better result in only $2\%$ of all instances. As most problems do not terminate with an optimal result, we are unable to say how far the solutions are from the globally optimal solution. Note that \inexact{} gives an optimal result in all instances in which all algorithms terminate. Figure~\ref{fig:progress} shows the progress of the best solution for the algorithms in a set of problems. For both \exact{} and \inexact{} we can see large improvements to the cut value when the local search algorithm is finished on the first subproblem. In contrast, \vcbase{} has more small step-by-step improvements and generally gives worse results.  

Figure~\ref{fig:pp54} shows the performance profile for the instances in this section. Here we can see that \vcbase{} has significantly worse results on a large subset of the instances, with more than $10\%$ of instances where the result is worse by more than $10\%$. Also, on a few instances, the results given by \exact{} and \inexact{} differ significantly. In general, both of them outperform \vcbase on most instances that are not solved to optimality by every algorithm.

\section{Conclusion}

In this paper, we give a fast parallel solver that gives high-quality solutions for large multiterminal cut problems. We give a set of highly-effective reduction rules that transform an instance into a smaller equivalent one. Additionally, we directly integrate an ILP solver into the algorithm to solve subproblems well suited to be solved using an ILP; and develop a flow-based local search algorithm to improve a given optimal solution. These optimizations significantly increase the number of instances that can be solved to optimality and improve the cut value of multiterminal cuts in instances that can not be solved to optimality. Our algorithm gives better solutions in more than two thirds of these instances, often improving the result by more than $5\%$ on hard instances. Additionally, we give an inexact algorithm for the multiterminal cut problem that aggressively shrinks the graph instances and is able to even outperform the exact algorithm on many of the hardest instances that are too large to be solved to optimality while still giving the exact solution for most easier instances. Important future work consists of improving the scalability of the algorithm by giving a distributed memory version.

\bibliographystyle{plainurl}
\bibliography{paper,oldref}


\begin{appendix}
\newpage
\section{Instances}
\label{app:instances}

\begin{table}[h!] \centering
   \caption{\label{t:graphs}Large Real-world Benchmark Instances}
   \begin{tabular}{lrrr}
     \toprule
     Graph & Source & $n$& $m$\\
     \midrule
     \multicolumn{4}{c}{Map Graphs} \\
     \midrule
     ak2010 & \cite{bader2013graph} & \numprint{45292} & $109K$\\
     ca2010 & \cite{bader2013graph} & $710K$ & $1.74M$\\
     ct2010 & \cite{bader2013graph} & \numprint{67578} & $168K$\\
     de2010 & \cite{bader2013graph} & \numprint{24115} & \numprint{58028}\\
     hi2010 & \cite{bader2013graph} & \numprint{25016} & \numprint{62063}\\
     luxembourg.osm & \cite{davis2011university} & $115K$ & $120K$ \\
     me2010 & \cite{bader2013graph} & \numprint{69518} & $168K$\\
     netherlands.osm & \cite{davis2011university} & $2.22M$ & $2.44M$ \\
     nh2010 & \cite{bader2013graph} & \numprint{48837} & $117K$\\
     nv2010 & \cite{bader2013graph} & \numprint{84538} & $208K$\\
     ny2010 & \cite{bader2013graph} & $350K$ & $855K$ \\
     ri2010 & \cite{bader2013graph} & \numprint{25181} & \numprint{62875}\\
     sd2010 & \cite{bader2013graph} & \numprint{88360} & $205K$\\
     vt2010 & \cite{bader2013graph} & \numprint{32580} & \numprint{77799}\\

     \midrule
     \multicolumn{4}{c}{Social, Web and Numerical Graphs}\\
     \midrule
     598a & \cite{soper2004combined} & $111K$ & $742K$ \\
     astro-ph & \cite{davis2011university} & \numprint{16706} & $121K$\\
     bcsstk30 & \cite{soper2004combined} & \numprint{28924} & $1.01M$\\
     ca-CondMat & \cite{davis2011university} & \numprint{23133} & \numprint{93439}\\
     caidaRouterLevel & \cite{davis2011university} & $192K$ & $609K$\\
     citationCiteseer & \cite{davis2011university} & $268K$ & $1.16K$\\
     cit-HepPh & \cite{davis2011university} & \numprint{34546} & $422K$\\
     cnr-2000 & \cite{davis2011university} & $326K$ & $2.74M$\\
     coAuthorsCiteseer & \cite{davis2011university} & $227K$ & $814K$ \\
     cond-mat-2005 & \cite{davis2011university} & \numprint{40421} & $176K$\\
     coPapersCiteseer & \cite{davis2011university} & $434K$ & $16.0M$\\
     cs4 & \cite{soper2004combined} & \numprint{22499} & \numprint{43858} \\
     eu-2005 & \cite{BoVWFI} & $862K$ & $16.1M$\\
     fe\_body & \cite{soper2004combined} & \numprint{45087} & $164K$ \\
     higgs-twitter & \cite{davis2011university} & $457K$ & $14.9M$\\
     in-2004 & \cite{BoVWFI} & $1.38M$ & $13.6M$\\
     NACA0015 & \cite{davis2011university} & $1.04M$ & $3.11M$\\
     uk-2002 & \cite{BoVWFI} & $18.5M$ & $261M$ \\
     venturiLevel3 & \cite{davis2011university} & $4.03M$ & $8.05M$ \\
     vibrobox & \cite{soper2004combined} & \numprint{12328} & $165K$\\
     \bottomrule
  \end{tabular}
\end{table}
     
\vfill{}
\newpage

\section{Additional Figures}
\label{app:figs}

\begin{figure}[h!]
  \centering
  \begin{subfigure}{.49\textwidth}
    \includegraphics[width=\linewidth]{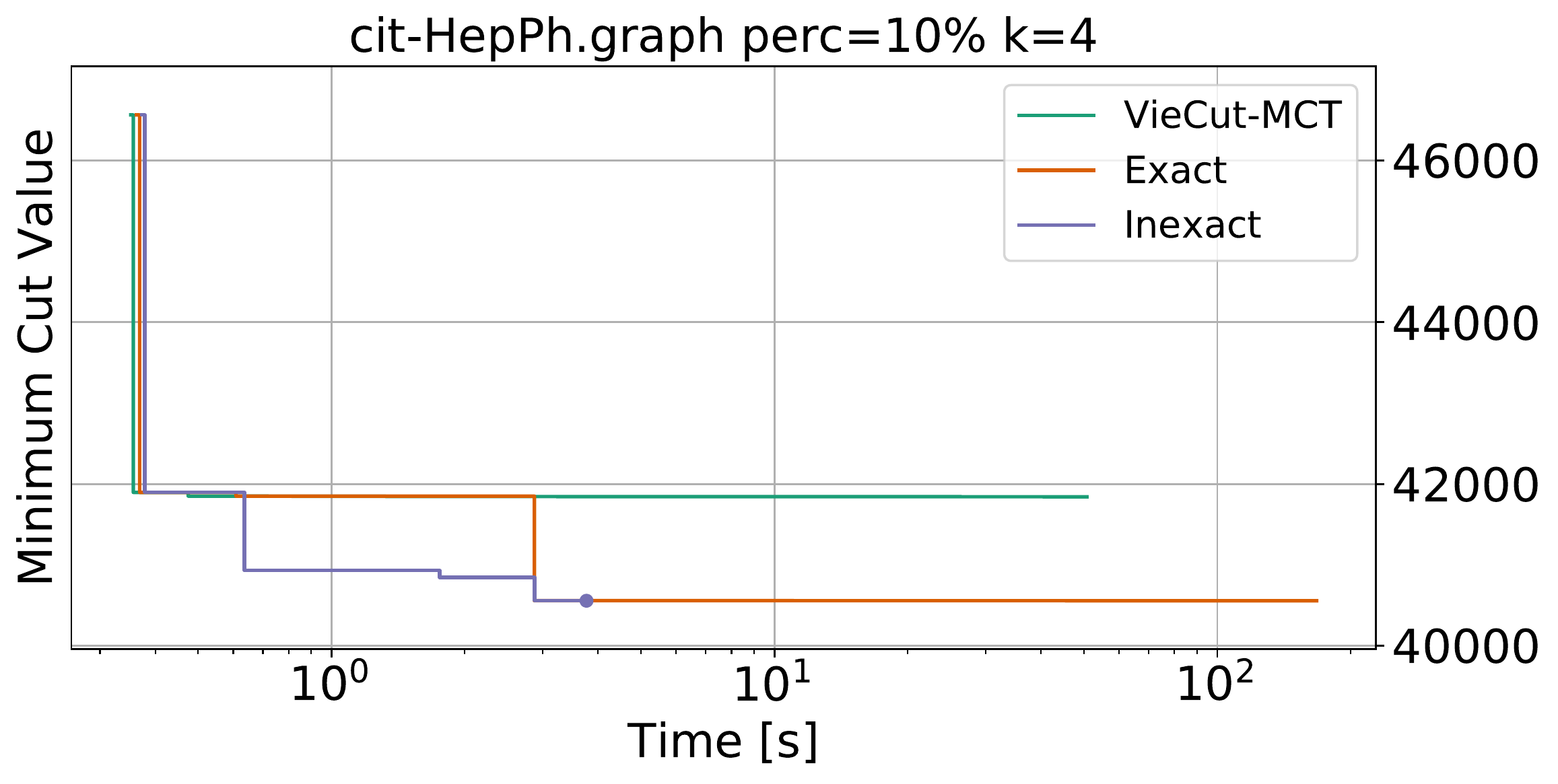}
  \end{subfigure}%
  \begin{subfigure}{.49\textwidth}
    \includegraphics[width=\linewidth]{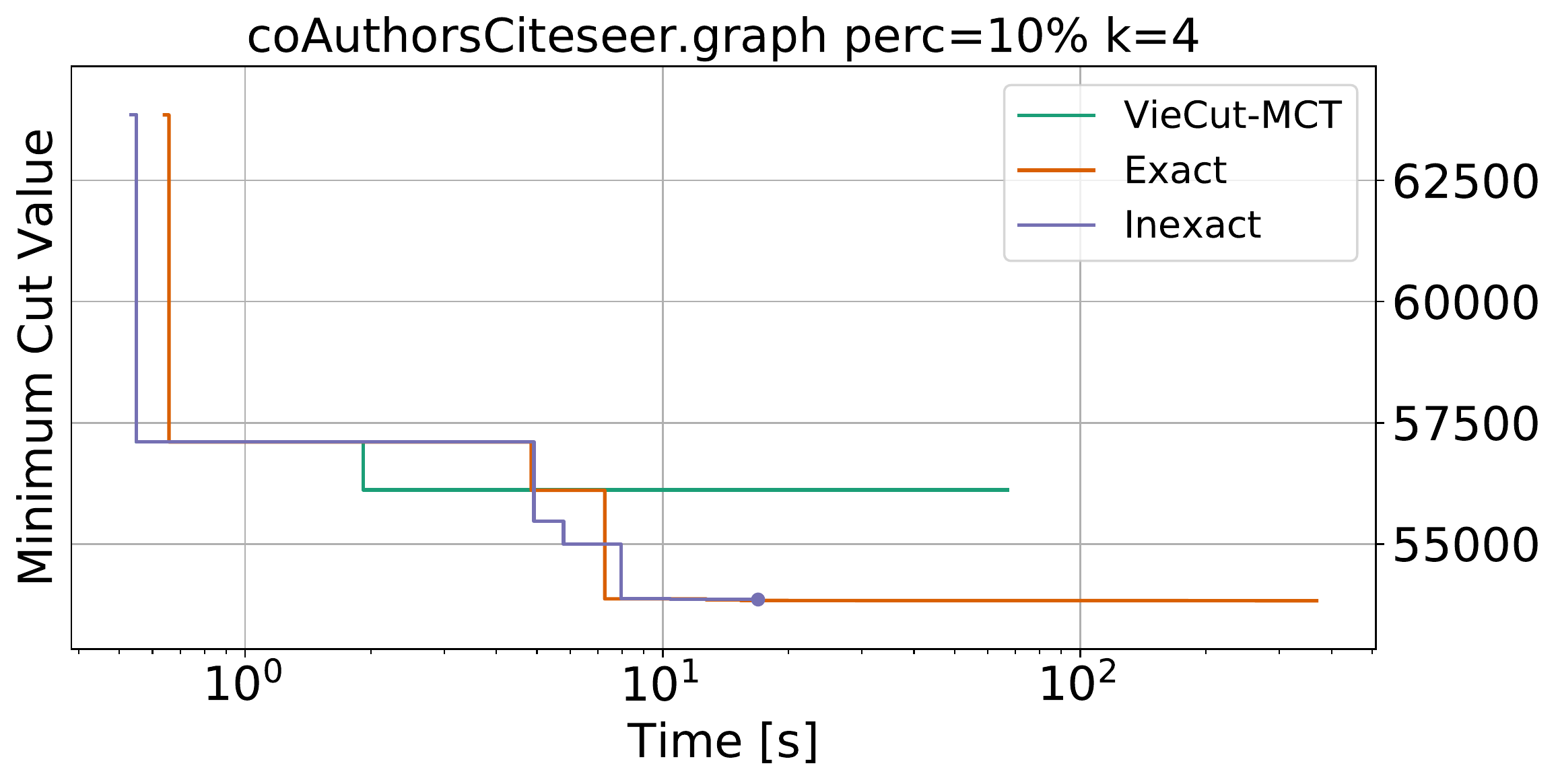}
  \end{subfigure}
  \begin{subfigure}{.49\textwidth}
    \includegraphics[width=\linewidth]{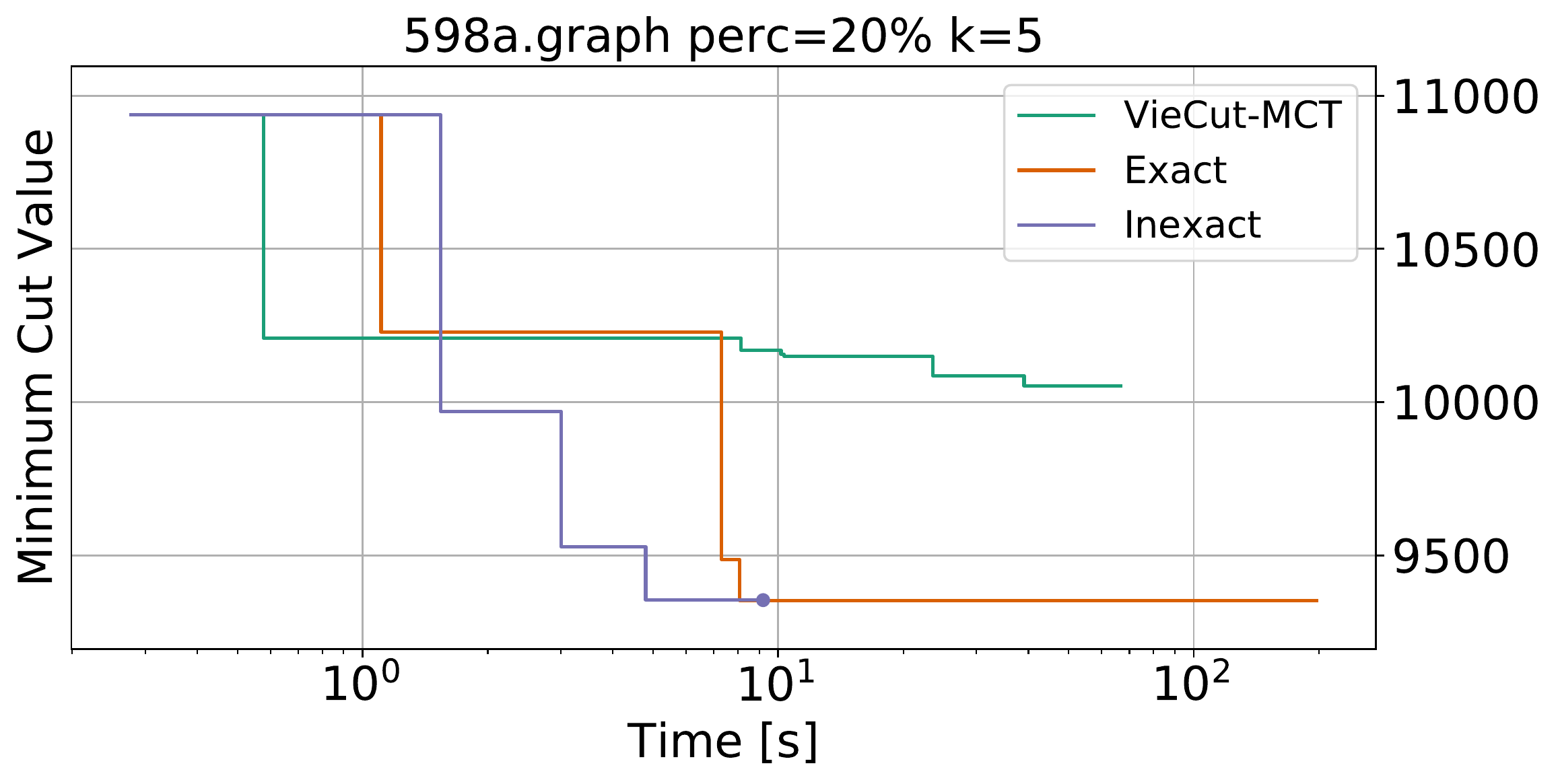}
  \end{subfigure}%
  \begin{subfigure}{.49\textwidth}
    \includegraphics[width=\linewidth]{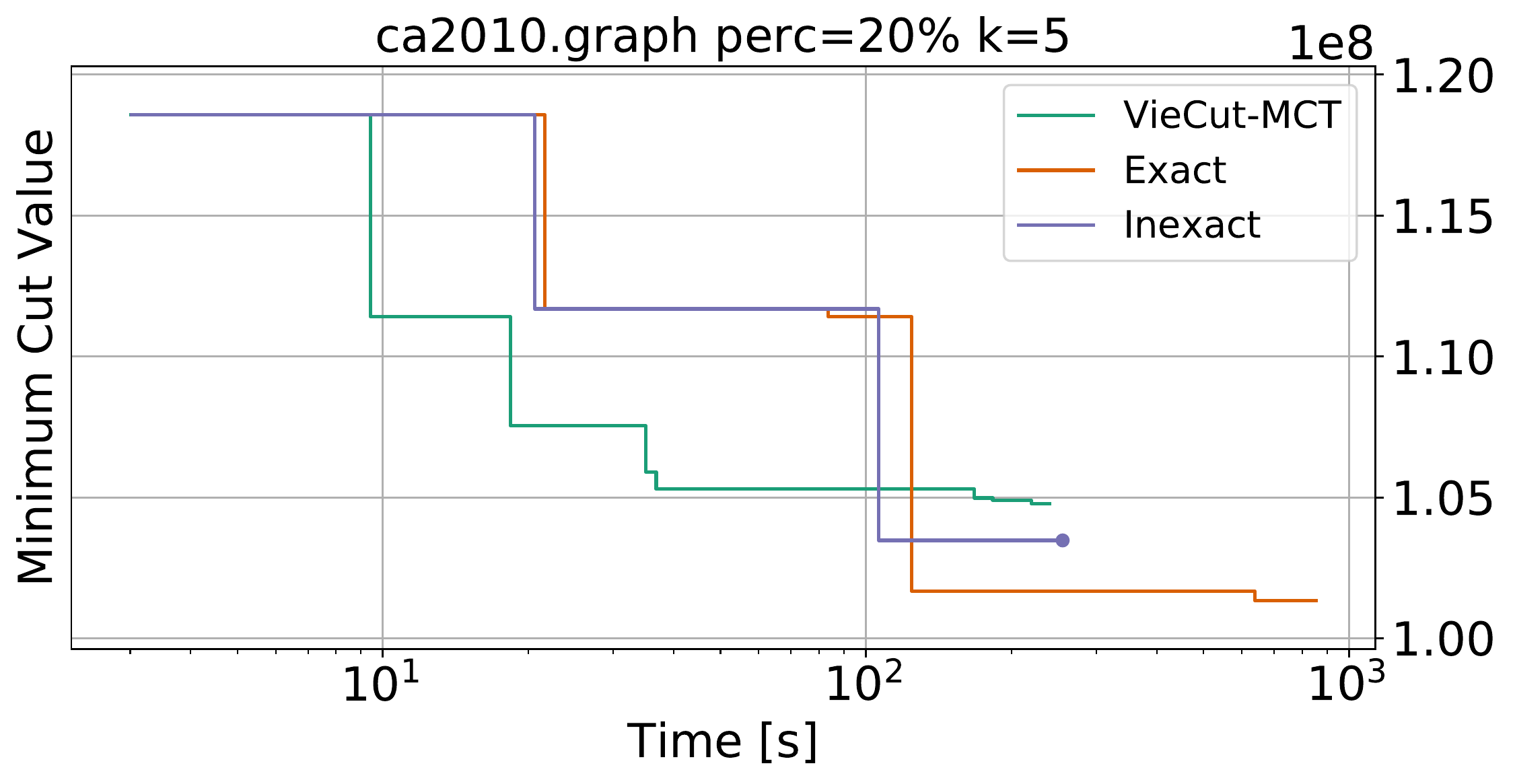}
  \end{subfigure}
  \begin{subfigure}{.49\textwidth}
    \includegraphics[width=\linewidth]{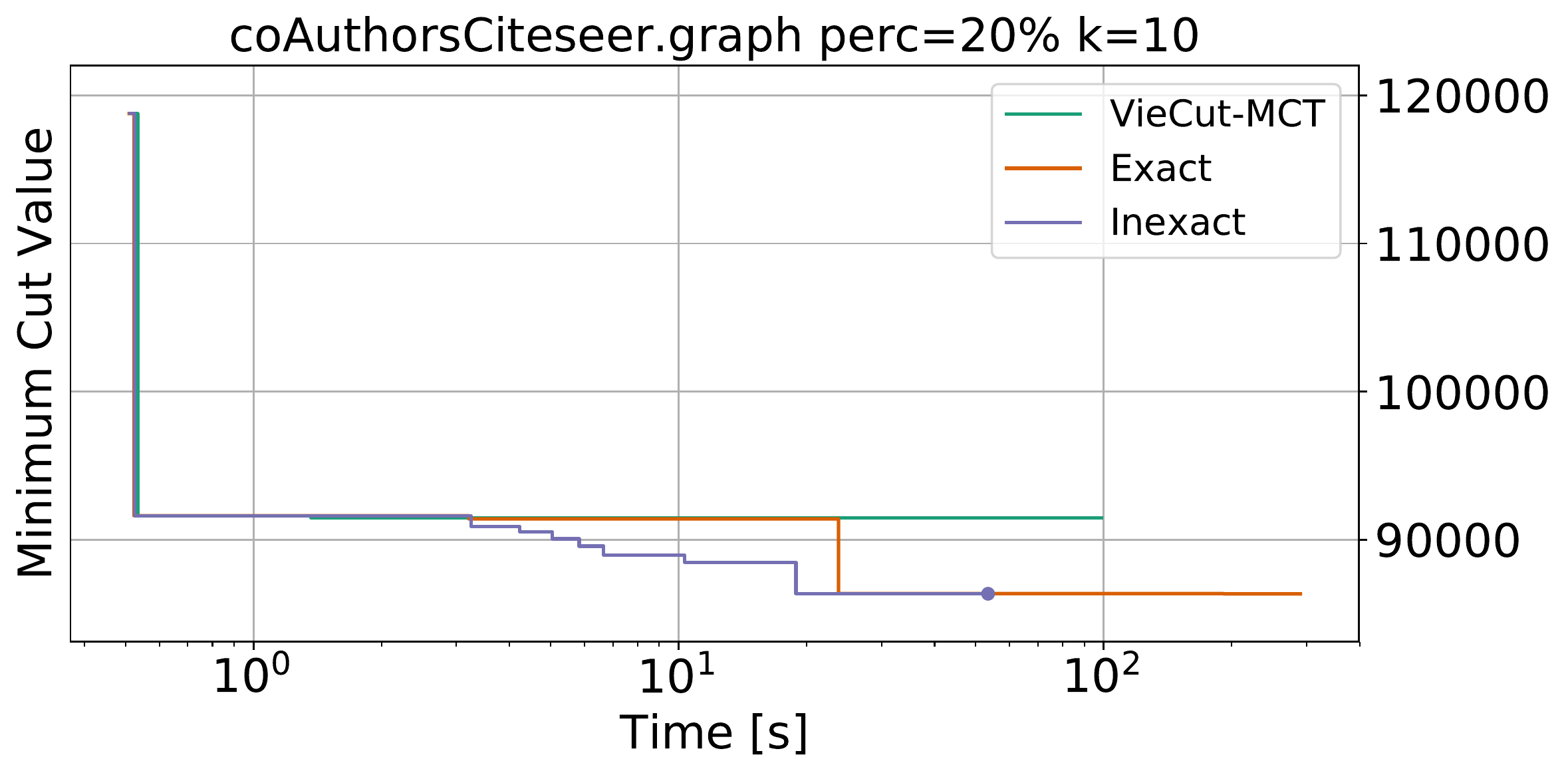}
  \end{subfigure}%
  \begin{subfigure}{.49\textwidth}
    \includegraphics[width=\linewidth]{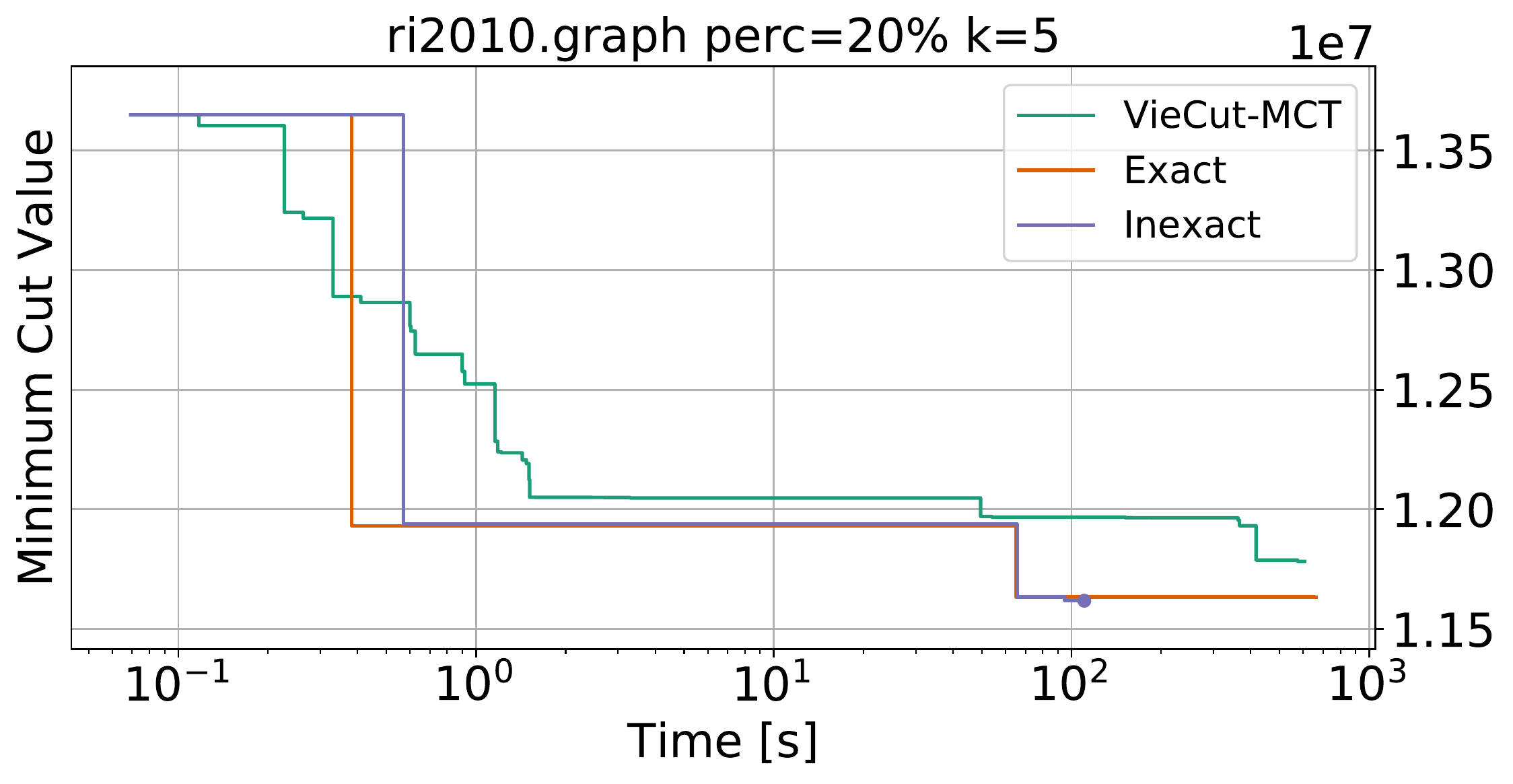}
  \end{subfigure}

  \caption{Progression of best result over time. Dot at end marks termination of algorithm.}
  \label{fig:progress}
\end{figure}

\end{appendix}

\end{document}